\newcommand{\where}{\ | \ }
\newcommand{\Paths}{\mbox{\sl Paths}}
\newcommand{\smAP}{\mbox{\scriptsize \sl AP}}
\newcommand{\smPaths}{\mbox{\scriptsize \sl Paths}}
\def\Acc{\mbox{\sl Acc}}
\renewcommand{\Pr}{\mbox{\rm Pr}}
\renewcommand{\Pr}{\mbox{\sl Pr}}
\def\bfP{\mathbf{P}}
\newcommand{\cl}{\mbox{\sl cl}}
\newcommand{\Not}{\mbox{$\, \neg \,$}}
\newcommand{\false}{\mbox{false}}
\newcommand{\true}{\mbox{true}} 
\newcommand{\AP}{\mbox{\sl AP}}
\newcommand{\Var}{\mbox{\sl Var}}
\newcommand{\set}[1]{\{ \, #1 \, \}}
\newcommand{\Nats}{\mbox{${\rm I\!N}$}}
\newcommand{\Next}{\bigcirc \, }
\newcommand{\Until}{\mbox{$\, {\sf U}\,$}}
\newcommand{\R}{\mbox{$\, {\sf R}\,$}}
\newcommand{\U}{\Until}
\newcommand{\Release}{\mbox{$\, {\sf R}\,$}}
\newcommand{\X}{\Next}
\newcommand{\trace}{\mbox{\sl trace}}
\renewcommand{\L}{{\cal L}}
\def\<{\langle}
\def\>{\rangle}
\renewcommand{\leq}{\leqslant}
\renewcommand{\geq}{\geqslant}
\renewcommand{\emptyset}{\varnothing}
\renewcommand{\a}[1]{\textbf{\textit{#1}}}
\newcommand{\la}{\langle}
\newcommand{\ra}{\rangle} 
 \newcommand{\ve}{{{\a v}}}
\newcommand{\de}{\Diamond}
\newcommand{\bo}{\Box}
\title{Parametric LTL on Markov Chains}
\author{Souymodip Chakraborty and Joost-Pieter Katoen\thanks{Currently on sabbatical leave at the University of Oxford, United Kingdom.}}
\institute{RWTH Aachen University, Ahornstra\ss{}e 55, D-52074 Aachen, Germany}
 \authorrunning{S. Chakraborty and J.-P. Katoen}
\begin{document}
 \maketitle

\begin{abstract}
This paper is concerned with the verification of finite Markov chains against parametrized LTL (pLTL) formulas.
In pLTL, the until-modality is equipped with a bound that contains variables; e.g., $\de_{\leq x}\ \varphi$ asserts that $\varphi$ holds within $x$ 
time steps, where $x$ is a variable on natural numbers.
The central problem studied in this paper is to determine the set of parameter valuations $V_{\prec p} (\varphi)$ for which the probability to satisfy pLTL-formula 
$\varphi$ in a Markov chain meets 
a given threshold $\prec p$, where $\prec$ is a comparison on reals and $p$ a probability.
As for pLTL determining the emptiness of $V_{> 0}(\varphi)$ is undecidable, we consider several logic fragments.
We consider parametric reachability properties, 
a sub-logic of pLTL restricted to next and $\de_{\leq x}$, 
parametric B\"uchi properties and finally, a maximal subclass of pLTL for which emptiness of $V_{> 0}(\varphi)$ is decidable.
\end{abstract}

\section{Introduction} 
Verifying a finite Markov chain (MC, for short) $M$ against an LTL-formula $\varphi$ amounts to determining the probability that $M$ satisfies $\varphi$, i.e., 
the likelihood of the set of  infinite paths of $M$ satisfying $\varphi$.
Vardi~\cite{DBLP:conf/focs/Vardi85} considered the qualitative version of this problem, that is, does $M$ almost surely satisfy $\varphi$, or with positive probability.
Together with Wolper, he showed that the qualitative LTL model-checking problem for MCs is PSPACE-complete.
The quantitative verification problem -- what is the probability of satisfying $\varphi$? -- has been treated by Courcoubetis and 
Yannakakis~\cite{DBLP:journals/jacm/CourcoubetisY95}.
An alternative algorithm that has a time complexity which is polynomial in the size of the MC and exponential in $| \varphi |$ is by 
Couvreur \emph{et al.}~\cite{CouvreurSahebSutre03}.
Recently, practical improvements have been obtained by Chatterjee \emph{et al.} for verifying the LTL(F,G)-fragment on MCs using generalized deterministic Rabin automata~\cite{DBLP:conf/cav/ChatterjeeGK13}.

This paper considers the verification of MCs against \emph{parametric} LTL formulas.
In parametric LTL~\cite{DBLP:journals/tocl/AlurETP01} (pLTL, for short), temporal operators can be subscripted by a variable ranging over the natural numbers.
The formula $\de_{\leq x} \, a$ means that in at most $x$ steps $a$ occurs, and $\Box \de_{\leq y} \, a$ means that at every
index $a$ occurs within $y$ steps. 
Note that $x$ and $y$ are variables whose value is not fixed in advance.
The central question is now to determine the values of $x$ and $y$ such that the probability of a given MC satisfying the pLTL-formula $\varphi$ meets a certain 
threshold $p$.
This is referred to as the valuation set $V_{\prec p}(\varphi)$ for comparison operator $\prec$.
This problem has both a qualitative (threshold $> 0$ and $= 1$) and a quantitative variant ($0 < p < 1$).

The main results of this paper are as follows.
Just as for the setting with Kripke structures~\cite{DBLP:journals/tocl/AlurETP01}, it is shown that checking the emptiness of $V_{> 0}(\varphi)$ in 
general is undecidable.
We therefore resort to fragments of pLTL.
We show that determining $V_{\geq p}(\de_{\leq x} \, a)$ can be done by searching in a range defined by the \emph{precision} of the input, 
whereas polynomial time 
graph algorithms suffice for its qualitative variant.
The same applies to formulas of the form $\Box \de_{\leq x} \, a$.
We provide necessary and sufficient criteria for checking the emptiness of $V_{> 0}(\varphi)$ (and $V_{=1}(\varphi)$)
for the fragments pLTL(F,X) and pLTL$_\de$, and prove that checking these criteria are NP-complete and PSPACE-complete, respectively. 
We also define a representation of these sets and provide algorithms to construct them. 


\paragraph{Related work.}
The verification of parametric probabilistic models in which certain transition probabilities are given as parameters (or functions thereof) has recently received 
considerable attention.
Most of these works are focused on parameter synthesis: for which parameter instances does a given (LTL or PCTL) formula hold?
To mention a few, Han \emph{et al.}~\cite{DBLP:conf/rtss/HanKM08} considered this problem for timed reachability in continuous-time MCs, Hahn \emph{et al.}~\cite{DBLP:conf/nfm/HahnHZ11} and
Pugelli \emph{et al.}~\cite{DBLP:conf/cav/Pugelli13} for Markov decision processes (MDPs), and Benedikt \emph{et al.}~\cite{DBLP:conf/tacas/BenediktLW13} for $\omega$-regular properties of interval MCs.
Hahn~\emph{et al.}~\cite{DBLP:journals/sttt/HahnHZ11} provide an algorithm for computing the rational function expressing the probability of reaching a given set of states in a parametric (reward) MDP based on exploiting regular expressions as initially proposed by Daws~\cite{DBLP:conf/ictac/Daws04}.
Other related work includes the synthesis of loop invariants for parametric probabilistic programs~\cite{DBLP:conf/sas/KatoenMMM10}.
To the best of our knowledge, verifying parametric properties on MCs has not been considered so far.
The closest related works are on combining two-variable FO with LTL for MDPs by Benedikt \emph{et al.}~\cite{DBLP:journals/corr/abs-1303-4533} and 
the computation of quantiles by Ummels and Baier~\cite{DBLP:conf/fossacs/UmmelsB13}.  

\paragraph{Organization of the paper.}
Section~\ref{sec:prelim} presents pLTL and MCs and a first undecidability result.
Section~\ref{sec:reach} considers parametric reachability.
Section~\ref{sec:pltlfx} treats the fragment pLTL(F,X) and
Section~\ref{sec:parambuechi} parametric B\"uchi properties.  
Section~\ref{sec:nobox}  treats the bounded always-free fragment of pLTL. 
Section~\ref{sec:concl} concludes the paper. \\[-3ex]

\section{Preliminaries}
\label{sec:prelim}
\vspace{-0.3cm}
\paragraph{Parametric LTL.}
Parametric LTL extends propositional LTL with bounded temporal modalities, for which the bound is either a constant or a variable.
Let $\Var$ be a finite set of variables ranged over by $x, y$, and $\AP$ be a finite set of propositions ranged over by $a$ and $b$.
Let $c \in \Nats$.
Parametric LTL formulas adhere to the following syntax:
$$
\varphi \ ::= \ a \where \neg \varphi \where \varphi \, \wedge \, \varphi  \where \X\! \varphi \where \varphi \U\varphi \where 
\de_{\hspace{-0.05cm}\prec x} \hspace{0.1cm}\varphi \where \de_{\hspace{-0.05cm}\prec c}\hspace{0.1cm} \varphi
$$
where $\prec \, \in \set{=, \leq, <, >, \geq}$.
A pLTL structure is a triple $(w,i,{\a v})$ where $w \in \Sigma^\omega$ with $\Sigma = 2^{\smAP}$ is an infinite word over sets of propositions, 
$i \in \Nats$ is an index, and ${\a v}: \Var \to \Nats$ is a variable valuation.
Analogously, we consider a valuation ${\a v}$ as a vector in $\mathbb{N}^d$, where $d$ for pLTL formula $\varphi$ is the number of variables occurring in $\varphi$. 
E.g. for $d=1$, the valuation is just a number $v$.
We compare valuations $\ve$ and ${\a v}'$ as ${\a v} \leq {\a v}'$ iff ${\a v}(x) \leq {\a v}'(x)$ for all $x$.
Let $w[i]$ denote the $i$-th element of $w$.
The satisfaction relation $\models$ is defined by structural induction over $\varphi$  as follows:
$$
\begin{array}{lcl}
(w,i,{\a v}) \models a & \mbox{iff} & a \in w[i] \\
(w,i,{\a v}) \models \!\Not \varphi & \mbox{iff} & (w,i,{\a v}) \not\models \varphi \\
(w,i,{\a v}) \models \varphi_1 \, \wedge \, \varphi_2 & \mbox{iff} & (w,i,{\a v}) \models\varphi_1 \mbox{ and } (w,i,{\a v}) \models \varphi_2 \\
 (w,i,{\a v}) \models \de_{\hspace{-0.05cm}\prec x} \, \varphi & \mbox{iff} &  
   (w,j,{\a v}) \models \varphi \mbox{ for some } j \prec \ve(x){+}i.  \\
\end{array}
$$  
For the sake of brevity, we have omitted the semantics of the standard LTL modalities. 
As usual, $\varphi_1\Release \varphi_2 \equiv \neg(\neg\varphi_1\U\neg\varphi_2)$, $\de \varphi \equiv \true \U \varphi$ 
and $\bo \varphi \equiv \neg \de \neg \varphi$.
The language of $\varphi$ is defined by $\L(\varphi) = \{(w,{\a v}) \ |\ (w,0,{\a v})\models \varphi\}$. 
Alur et al.~\cite{DBLP:journals/tocl/AlurETP01} have shown that other modalities such as $\U\!_{\leq x}$, $\de_{> x}$, $\Box_{> x}$, $\U\!_{> x}$, 
$\R_{\leq x}$ and $\R_{> x}$, can all be encoded in our syntax.  
For instance, the following equivalences hold: 															
\begin{equation}\label{equiv}
\begin{array}{rclrcl}
\de_{> x} \, \varphi & \, \equiv \, & \Box_{\leq x} \, \de \X \varphi, & 
\Box_{> x} \, \varphi & \, \equiv \, & \de_{\leq x} \, \Box \X \varphi, 
\\
\varphi \U\!_{\leq x} \, \psi & \, \equiv \, & (\varphi \U \psi) \, \wedge \, \de_{\leq x} \, \psi, \quad 
& \varphi \U\!_{> x} \, \psi & \, \equiv \, & \Box_{\leq x} \left( \varphi \, \wedge \, \X (\varphi \U \psi) \right)
\end{array}
\end{equation}
In the remainder of this paper,  we focus on bounded always and eventualities where all bounds are upper bounds.
We abbreviate $\de_{\leq x}$ by $\de_x$ and do similar for the other modalities.
For valuation ${\a v}$ and pLTL-formula $\varphi$, let ${\a v}(\varphi)$ denote the LTL formula obtained 
from $\varphi$ by replacing variable $x$ by its valuation $\ve(x)$; e.g., 
${\a v}(\de_{ x} \, \varphi)$ equals $\de_{ {\a v}(x)} \, {\a v}(\varphi)$.

\paragraph{Markov chains.}
A discrete-time Markov chain $M$ is a quadruple $(S,\bfP,s_0,L)$ where $S$ is a finite set of states with $m=|S|$,
$\bfP: S \times S \to [0,1]$ is a stochastic matrix, $s_0 \in S$ an initial state, and $L: S \to 2^{\smAP}$ a state-labeling function.  
$\bfP(u,v)$ denotes the one-step probability of moving from state $u$ to $v$.
A trajectory (or path) of a Markov chain (MC, for short) $M$ is a sequence $\set{s_i}_{i \geq 0}$ such that $\bfP(s_i,s_{i+1}) > 0$ for all $i \geq 0$.
A trajectory $\pi = s_0 s_1 s_2 \ldots$ induces the trace $\trace(\pi) = L(s_0) L(s_1) L(s_2) \ldots$.
Let $\Paths(M)$ denote the set of paths  of MC $M$.
A path $\pi$ satisfies the pLTL-formula $\varphi$ under the valuation ${\a v}$, denoted $\pi \models{\a v}(\varphi)$,  whenever $(\trace(\pi), 0,{\a v}) \models \varphi$ (or equivalently, $(\trace(\pi),{\a v})\in\mathcal{L}(\varphi)$).
A finite path (or path fragment) satisfies a formula under a valuation if any infinite extension of it also satisfies the formula.
Let $\Pr$ be the probability measure on sets of paths, defined by a standard cylinder construction~\cite{DBLP:conf/focs/Vardi85}.
The probability of satisfying $\varphi$ by $M$ under valuation ${\a v}$ is given by $\Pr \set{\pi \in \Paths(M) \mid \pi \models{\a v}(\varphi)}$, generally abbreviated as $\Pr (M \models \ve(\varphi))$. 

\paragraph{Valuation set.}
The central problem addressed in this paper is to determine the valuation set of a pLTL formula $\varphi$.
Let $M$ be an MC, $p \in [0,1]$ a probability bound, and $\prec \, \in \set{=, \leq , <, >, \geq}$. 
Then we are interested in determining:
$$
V_{\prec p}(\varphi) \ = \ \set{\ve \mid \Pr (M \models{\a v}(\varphi)) \prec p},
$$
i.e., the set of valuations under which the probability of satisfying $\varphi$ meets the bound $\prec p$.
In particular, we will focus on the decidability and complexity of the emptiness problem for $V_{\prec p}(\varphi)$, i.e., 
the decision problem whether $V_{\prec p}(\varphi) = \emptyset$ or not,  on algorithms (if any) determining the set $V_{\prec p}(\varphi)$, 
and on the size of the minimal representation of $V_{\prec p}(\varphi)$.
In the qualitative setting, the bound $\prec p$ is either $> 0$, or $= 1$.

\begin{proposition}\label{undy}
For $\varphi \in$ pLTL, the problem if $V_{> 0}(\varphi) = \emptyset$ is undecidable.
\end{proposition}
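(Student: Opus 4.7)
The plan is to reduce from the halting problem of a deterministic two-counter (Minsky) machine, adapting the strategy used by Alur \emph{et al.}~\cite{DBLP:journals/tocl/AlurETP01} for plain pLTL satisfiability to the probabilistic model-checking setting. Given a two-counter machine $\mathcal{M}$ with initial configuration $(q_0,0,0)$, the goal is to construct a Markov chain $M$ and a pLTL formula $\varphi$ with a single parameter $x$ such that $\mathcal{M}$ halts if and only if $V_{>0}(\varphi)\neq\emptyset$.

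For the chain $M$, I would take a chain over atomic propositions encoding an alphabet $\Sigma$ that represents configurations $(q,c_1,c_2)$ of $\mathcal{M}$, with each configuration stored in a block of fixed layout and the two counters written in unary between delimiter symbols. The transition probabilities are arranged so that every symbol of $\Sigma$ is emitted with positive probability at every step; consequently, every finite prefix $\sigma\in\Sigma^*$ has strictly positive cylinder probability in $M$. The formula $\varphi$ uses $x$ to denote the length of a candidate halting computation: it asserts that the first $\ve(x)$ positions of the trace encode a valid run of $\mathcal{M}$ starting from $(q_0,0,0)$ and reaching the halting state, with all instructions---increment, decrement, and zero-test---correctly applied between consecutive configuration blocks. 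Since $\varphi$ constrains only a bounded prefix, its set of satisfying traces is a finite union of cylinders, so if $\mathcal{M}$ halts in $n$ steps the choice $\ve(x)=n$ yields a set of positive probability, and conversely any $\ve\in V_{>0}(\varphi)$ witnesses a non-empty cylinder and hence an explicit halting computation of $\mathcal{M}$.

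The main obstacle is the inter-block consistency of counter values: between two successive configuration blocks one counter must change by exactly $\pm 1$ while the other is preserved, and zero-tests must be verified. Following the encoding of Alur \emph{et al.}, I would fix a uniform block length and use the bounded parametric modalities $\de_x$ together with its dual to address positions at controlled distances, so that the assertion "for every configuration inside the prefix, its successor obeys the transition rule" is expressible as a $\bo$ restricted to the prefix carved out by $\de_x$. Some care is required to ensure that only a single free parameter is used and that the resulting pLTL formula is strong enough to simulate Minsky semantics; this is the delicate combinatorial step that carries the classical undecidability argument over to the probabilistic setting.
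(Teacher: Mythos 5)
Your overall architecture matches the paper's: reduce from halting of a two-counter machine, and use a Markov chain that emits every symbol of $\Sigma$ with positive probability at each step, so that positive probability of $\ve(\varphi)$ coincides with the existence of a single satisfying finite trace, i.e.\ with satisfiability. That part of the proposal is sound and is exactly how the paper transfers the Alur \emph{et al.} satisfiability result to the probabilistic setting.

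However, there is a genuine gap in the core of the reduction, namely in what the parameter $x$ is used for. You let $\ve(x)$ be the \emph{total length of the halting computation} and then speak of a ``fixed'' or ``uniform'' block length for configurations. If the block length is a constant, the counters are bounded and the machine degenerates to a finite automaton, so the reduction proves nothing; if the blocks have varying length (unary counters of unbounded value), then the inter-block consistency constraints --- ``counter $c_1$ in block $i{+}1$ is one more than in block $i$'' --- require counting across blocks of unbounded width, which is not expressible in LTL, and your single parameter, being spent on the total prefix length, gives you no handle on it. This is precisely the ``delicate combinatorial step'' you defer, and with your choice of parameter it cannot be carried out. The paper (following Alur \emph{et al.}) resolves it by using $x$ differently: $\ve(x)$ is a guessed upper bound on the \emph{maximum counter value}, every configuration block has length exactly $\ve(x)$ (enforced by $\U_{=x}$ between delimiters), a counter value $c$ is encoded as the position of a marker at offset $c$ inside its block, and increment/decrement/preservation are expressed by requiring the marker of the next block to lie at distance exactly $x$ (suitably shifted) from the marker of the current block, via $\U_{=x}$, which is encodable from $\de_{\leq x}$ and $\Box_{\leq x}$. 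The number of configurations is then left unbounded and handled by an ordinary until $(\cdots)\U p_k$, so no parameter is needed for the computation length at all. To repair your proof you should adopt this use of the parameter; as written, the reduction does not go through.
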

\begin{proof}
The proof is based on  \cite[Th.\ 4.1]{DBLP:journals/tocl/AlurETP01}, see the appendix.
\hfill $\quad\blacksquare$
\end{proof}
It follows that deciding whether $V_{=1}(\varphi)=\emptyset$ is undecidable, as $V_{>0}(\varphi)= \emptyset$ iff $V_{=1}(\neg\varphi)\neq \emptyset$.
\noindent
As a combination of $\de_{\leq x}$ and $\Box_{\leq x}$ modalities can encode $\U\!_{=x}$, e.g., 
$$
\neg a \wedge \X (\neg a \U\!_{=x} \, a) 
\ \equiv \ 
\X (\neg a \U\!_{\leq x} \, a) \wedge (\neg a \U\!_{> x} \, a),
$$
we will restrict ourselves to fragments of pLTL where each formula is in negative normal form and 
the only parametrized operator is $\de_{\leq x} \, \varphi$. 
We refer to this fragment as pLTL$_\de$:
\begin{equation}\label{pltl_d}
\varphi \ ::= \ a \where \neg a \where \varphi \wedge  \varphi \where \varphi  \vee  \varphi \where \X\! \varphi \where \varphi \U \varphi 
\where \varphi \Release \varphi 
\where \Box \varphi \where \de_{\leq x}  \, \varphi \where \de_{\leq c} \, \varphi \where \Box_{\leq c} \, \varphi. 
\end{equation}
We show it is a sub-logic of pLTL for which the emptiness problem for $V_{> 0}(\varphi)$ is decidable.  The logic has a favourable \emph{monotonicity} property, i.e.,
\begin{remark}\label{monotonicity}
For every pLTL$_\de$-formula $\varphi$, infinite word $w$ and valuations ${\a v}, {\a v}'$, ${\a v} \leq {\a v}'$ 
implies $(w,{\a v}) \models \varphi \implies (w,{\a v}')\models \varphi$.
\end{remark}
Here $(w,{\a v}) \models \varphi$ is s shorthand for $(w,0,{\a v}) \models \varphi$.
We start off with briefly considering (only) parametric eventualities and then consider the sub-logic pLTL(F,X) restricted to next and $\de_{x}$.
Later on, we also consider parametric B\"uchi formulas, and finally, pLTL$_\de$. Most of the proofs are moved to the appendix.

\section{Parametric Reachability}
\label{sec:reach}

In this section, we consider pLTL-formulas of the form $\de_{x} \, a$ for proposition $a$, or equivalently, $\de_{x} \, T$ for the set of target states $T = \set{s \in S \mid a \in L(s)}$.
We consider bounds of the form $\geq p$ with $0 < p < 1$.
The valuation set of interest is thus $V_{\geq p}(\de_{x} \, a)$.
Let $\mu_i$ be the probability of reaching $T$ within $i$ steps; the sequence $\{\mu_i\}$ is ascending. 
There can be two cases: (a) the sequence reaches a constant value in $m$ steps ($m$ being the size of Markov chain) or (b) the sequence monotonically increases and converges to $\mu_\infty$. 
This makes the emptiness problem for $V_{\geq p}(\de_{x} \, a)$ decidable. 
In the first case, we check $\mu_m\geq p$.
In the second case, emptiness is decidable in time polynomial  in $m$, by determining $\mu_\infty = \Pr (\de a)$ which can be done by solving a system of linear equations with at most $m$ variables.
Then, $V_{\geq p}(\de_x \, a) \neq \emptyset$ iff $p < \mu_\infty$.

Assume in the sequel that $T$ is non-empty.
Let $\min V_{\geq p}(\de_{x} \, a)=n_0$.
The valuation set can thus be represented by $n_0$ (this gives a minimal representation of the set).
Membership queries, i.e., does $n \in V_{\geq p}(\de_{x} \, a)$, then simply boil down to checking whether $n_0 \leq n$, 
which can be done in constant time (modulo the size of $n_0$).
The only catch is that $n_0$ can be very large if $p$ is  close to $\mu_\infty$. 
A simple example elucidates this fact.
\begin{example}
Consider the MC $M$ with $S = \set{s_0, t}$, $L(t) = \set{a}$, $L(s_0) = \emptyset$, $\bfP(s_0,s_0) = \frac{1}{2} = \bfP(s_0,t)$ and $\bfP(t,t) = 1$.
Then $\Pr(M \models \de_n \, a) = 1 - \left( \frac{1}{2} \right)^n$.
It follows that $\min V_{\geq p}(\de_x \, a)$ goes to infinity when $p$ approaches one.
\end{example}
The following bound on $n_0$ can nonetheless be provided.  
This bound allows for obtaining the minimum value $n_0$ by a binary search. 
\begin{proposition} For MC $M$,
 $\min V_{\geq p}(\Diamond_x a) \leq \log_\gamma (1 - (1-\gamma)\frac{p}{b})$, where $0<\gamma< 1$ and $b>0$. 
\end{proposition}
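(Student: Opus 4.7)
The plan is to establish a geometric lower bound of the form $\mu_n \geq b \cdot \frac{1-\gamma^n}{1-\gamma}$ on the probability $\mu_n = \Pr(M \models \de_n\, a)$ of reaching the target set $T$ within $n$ steps, and then to invert this bound to read off the claimed upper estimate on $n_0 = \min V_{\geq p}(\de_x\, a)$.

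First, I would extract the constants $b>0$ and $0<\gamma<1$ from the structure of $M$. The intuition is that $\gamma$ controls the probability of \emph{not yet} having reached $T$, while $b$ is a lower bound on the ``freshly acquired'' probability per step. A clean way to obtain such constants is via the transient sub-stochastic matrix $Q$ obtained by making $T$ absorbing: since $T$ is assumed reachable in a finite chain, $Q$ has spectral radius strictly less than $1$, so one may take $\gamma$ slightly above this radius and set $b$ accordingly. Equivalently and more elementarily, from every transient state that can reach $T$ the probability of not reaching $T$ within a fixed number of steps is bounded away from $1$, and this bound propagates geometrically. The choice must be good enough to guarantee the per-step increment $\mu_n-\mu_{n-1} \geq b\,\gamma^{n-1}$ for all $n\geq 1$.

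Second, I would prove by induction on $n$ that $\mu_n \geq b\sum_{i=0}^{n-1}\gamma^i = b\cdot\frac{1-\gamma^n}{1-\gamma}$. The base case is immediate, and the inductive step adds the increment $\mu_n-\mu_{n-1}\geq b\,\gamma^{n-1}$ to the hypothesis for $\mu_{n-1}$.

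Finally, the sufficient condition $\mu_n\geq p$ is implied by $b\cdot\frac{1-\gamma^n}{1-\gamma}\geq p$, which rearranges to $\gamma^n \leq 1-(1-\gamma)\frac{p}{b}$. Applying $\log_\gamma$, which reverses the inequality since $0<\gamma<1$, gives $n \geq \log_\gamma\!\bigl(1-(1-\gamma)\tfrac{p}{b}\bigr)$, and hence $n_0$ is at most this value. The main obstacle is the first step: identifying suitable $b,\gamma$ and rigorously justifying the per-step geometric increment, for which one needs the Markov property together with spectral information about the transient part of $M$. Once that increment is in hand, the logarithmic inversion is routine arithmetic. Note also that the bound is meaningful exactly when $p<b/(1-\gamma)$, which covers precisely the non-empty regime for $V_{\geq p}(\de_x\, a)$.
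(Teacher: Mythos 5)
There is a genuine gap, and it sits exactly where you flag "the main obstacle": the per-step increment bound $\mu_n-\mu_{n-1}\geq b\,\gamma^{n-1}$ with a fixed $b>0$ is simply false in general. The increments can be zero: take the deterministic chain $\bfP(s_0,s_1)=1$, $\bfP(s_1,t)=1$, $t$ absorbing and the only $a$-state; then $\mu_0=\mu_1=0$ and $\mu_n=1$ for $n\geq 2$, so no $b>0$ satisfies $\mu_1-\mu_0\geq b$, and likewise all increments vanish from $n=3$ onward. More generally, whenever the shortest path from $s_0$ to $T$ has length $\geq 2$ the base increment is zero, and once $\mu_n$ has stabilised (which happens in finitely many steps for acyclic transient parts) all further increments are zero, so a geometric \emph{lower} bound on $\mu_n-\mu_{n-1}$ starting at $n=1$ cannot be salvaged by any choice of $\gamma,b$. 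The spectral-radius fact you invoke gives geometric decay of the \emph{tail} $\mu_\infty-\mu_n$, i.e.\ an upper bound on what remains, which neither is nor implies a pointwise lower bound on individual increments; so the inductive step of your plan has no foundation, and the telescoping sum $\mu_n\geq b\sum_{i<n}\gamma^i$ does not follow.

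It is also worth comparing directions with the paper. The paper's proof bounds $\mu_{n+1}$ from \emph{above}: it makes $T$ absorbing, takes the transient submatrix $\mathbf Q$, sets $\gamma:=\tau(\mathbf Q)$ (the coefficient of ergodicity) and $b:=r_{\max}$ (the largest one-step probability of entering $T$ from a transient state), and derives $\mu_{n+1}=\sum_{j=0}^{n+1}\mathbf i^T\mathbf Q^j\mathbf r\leq\sum_{j=0}^{n+1}\gamma^j\, b$, then combines this with $\mu_{n_0}\geq p$ to get $b\cdot\frac{1-\gamma^{n_0}}{1-\gamma}\geq p$ and hence the stated expression. So the constants $\gamma,b$ are concrete and come from a contraction argument on $\mathbf Q$, not from a lower bound on increments. (Note that your inversion step is the one that would be needed if a lower bound on $\mu_n$ were available --- as written, the literal inequality $n_0\leq\log_\gamma(\cdot)$ calls for a lower bound on $\mu_n$, whereas the paper's chain of inequalities supplies an upper bound; resolving that tension is a separate issue with the paper's own write-up. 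But in any case your proposal does not constitute a proof, because its central lemma fails on trivial examples and you give no construction of $b,\gamma$ that could repair it.)
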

\begin{proof} 
Collapse all $a$-states into a single state $t$ and make it absorbing (i.e., replace all outgoing transitions by a self-loop with probability one). 
Let $t$ be the only \emph{bottom strongly connected component} (BSCC) of $M$ (other BSCCs can be safely ignored). 
Let $\{1,\hdots,m\}$ be the states of the modified MC $M$, with the initial state $s_0$ and the target state $t$  represented by $1$ and $m$, respectively.
Let $\bf Q$ be the ($m{-}1)\times (m{-}1$) transition matrix of the modified MC without the state $t$. 
That is, ${\bf Q}(i,j) = {\bf P}(i,j)$ iff $j\neq m$ where ${\bf P}$ is the transition probability matrix of $M$.
We have the following observation:
\begin{enumerate}
\item 
Let the coefficient of ergodicity $\tau({\bf Q})$ of ${\bf Q}$ defined as 
$$ \tau({\bf Q}) \ = \ 1-\displaystyle\min_{i,j} \left(\sum_k \min\{{\bf Q}(i,k),{\bf Q}(j,k)\} \right).$$ 
As $\bf Q$ is sub-stochastic and no row of $\bf Q$ is zero, it follows $0<\tau({\bf Q})<1$. 
\item 
Let vector ${\bf r}^T=(r_1,\hdots,r_{m{-}1})$ with $r_i = {\bf P}(i,m)$, $r_{\max}$ be the maximum element in ${\bf r}$ and ${\bf i}^T$ be $(1,0,\hdots,0)$.
The probability of reaching the state $m$ from the state $1$ in at most $n{+}1$ steps is the probability of being in some state $i<m$ within $n$ steps and taking the next transition to $m$: \vspace{-0.2cm}
$$ \mu_{n+1}  =  \sum_{j=0}^{n+1} {\bf i}^T{\bf Q}^j{\bf r} \ \leq \ \sum_{j=0}^{n+1} \tau({\bf Q})^j r_{\max}.$$
\end{enumerate}
Let $\tau({\bf Q}) = \gamma$ and $r_{\max}=b$. 
The integer $n_0$ is the smallest integer such that $\mu_{n_0} \geq p$, which implies
that $b{\cdot}\frac{1-\gamma^{n_0}}{1-\gamma} \geq p$.
This yields $n_0\leq \log_\gamma (1 - (1-\gamma)\frac{p}{b})$. \hfill $\quad\blacksquare$ 
\end{proof}

%
%

As in the non-parametric setting, it follows that (for finite MCs) the valuation sets $V_{>0}(\de_{x} \, a)$ and $V_{=1}(\de_{x} \, a)$ can be determined by a graph analysis, i.e. no inspection of the transition probabilities is necessary for qualitative parametric reachability properties.
\begin{proposition}
 The problem $V_{>0}(\de_x \, a)=\emptyset$ is \emph{NL}-complete.
\end{proposition}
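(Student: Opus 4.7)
The plan is to observe that this qualitative reachability question reduces to pure graph reachability on the underlying transition graph of $M$, and then invoke the Immerman-Szelepcs\'enyi theorem.

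First I would argue the following equivalence: $V_{>0}(\de_x \, a)\neq \emptyset$ if and only if there exists a finite path in the underlying digraph of $M$ from $s_0$ to some state labelled $a$. The ``if'' direction is immediate: if such a path of length $k$ exists, then every transition along it carries positive probability, so $\Pr(M\models \de_k \, a) > 0$ and $k \in V_{>0}(\de_x \, a)$. The ``only if'' direction is the contrapositive: if no state labelled $a$ is reachable from $s_0$, then for every $n$ the set of paths of length at most $n$ from $s_0$ visiting an $a$-state is empty, so $\Pr(M\models \de_n \, a)=0$ for all $n$, giving $V_{>0}(\de_x \, a)=\emptyset$. Thus the emptiness problem is exactly the non-reachability problem in the digraph $(S, \{(u,v) : \bfP(u,v) > 0\})$ with source $s_0$ and target set $T=\{s \mid a \in L(s)\}$.

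For membership in \emph{NL}, I would first show the complement (non-emptiness) lies in \emph{NL}: a nondeterministic logspace machine guesses the successors of the current state step by step, starting from $s_0$, maintaining only a pointer to the current state and a binary counter bounded by $m$, and accepts as soon as it reaches an $a$-state. By the Immerman-Szelepcs\'enyi theorem, $\emph{NL}=\emph{coNL}$, so the emptiness problem itself is in \emph{NL}.

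For \emph{NL}-hardness, I would reduce from directed graph non-reachability (\emph{NL}-complete, again using $\emph{NL}=\emph{coNL}$). Given a digraph $G=(V,E)$ with designated vertices $s,t$, build an MC $M$ with state space $V$, initial state $s_0 = s$, labelling $L(t) = \{a\}$ and $L(u) = \emptyset$ for $u \neq t$, and transition probabilities that distribute mass uniformly over the out-neighbours of each vertex (adding a self-loop at vertices with no outgoing edges so that $\bfP$ remains stochastic). This construction is clearly computable in logarithmic space, and by the equivalence established in the first paragraph, $V_{>0}(\de_x \, a) = \emptyset$ in $M$ if and only if $t$ is unreachable from $s$ in $G$.

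The only subtle point is keeping the reduction faithfully logspace and ensuring that the modifications needed to keep $\bfP$ stochastic (the extra self-loops at sinks) do not create spurious paths to an $a$-state; since only non-$a$ sinks receive self-loops, no new reachability to $T$ is introduced, so this is not an obstacle. Combining membership and hardness yields \emph{NL}-completeness.
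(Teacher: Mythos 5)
Your proof is correct and takes the same route as the paper, which simply observes that the problem coincides with (non-)reachability in the underlying digraph and relies on the known \emph{NL}-completeness of that problem. You have merely filled in the details the paper leaves implicit: the equivalence between emptiness of the valuation set and non-reachability, the appeal to Immerman--Szelepcs\'enyi to handle the complementation, and the logspace reduction for hardness.
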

\begin{proof}
 The problem is the same as reachability in directed graphs. \hfill $\quad\blacksquare$ 
\end{proof}
\begin{proposition}
The sets $V_{>0}(\de_{x} \, a)$ and $V_{=1}(\de_{x} \, a)$ can be determined in polynomial time by a graph analysis of MC $M$.
\end{proposition}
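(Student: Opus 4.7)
The plan is to reduce both computations to standard polynomial-time graph problems on the underlying transition graph $G$ of $M$, whose edge set is $\{(u,v) : \bfP(u,v) > 0\}$. Since $M$ is finite, a finite path is realizable in $G$ iff it has positive probability in $M$, and a measurable set of infinite paths has probability zero iff no path in that set is realizable in $G$. By Remark~\ref{monotonicity}, both qualitative valuation sets are upward closed in $\Nats$, so each is determined either by being empty or by its unique minimum element, which is what I would output.

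For $V_{>0}(\de_x a)$, note that $\Pr(M\models\de_x a)>0$ holds iff there is a walk of length at most $x$ from $s_0$ to $T=\{s \mid a\in L(s)\}$ in $G$. A BFS from $s_0$ in $G$ yields the shortest distance $n_0$ from $s_0$ to $T$ (or reports $T$ unreachable). Then $V_{>0}(\de_x a)=\{x\in\Nats : x\geq n_0\}$, represented by $n_0$, and is empty precisely when $T$ is unreachable from $s_0$.

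For $V_{=1}(\de_x a)$, the core observation is: $\Pr(M\models \de_x a)<1$ iff there exists a walk of length $x$ in $G$ starting at $s_0$ whose states all lie in $S\setminus T$. Let $G_{\bar T}$ be the subgraph of $G$ induced on $S\setminus T$. If $s_0\in T$ then trivially $V_{=1}(\de_x a)=\Nats$. Otherwise I would compute the SCCs of $G_{\bar T}$ using Tarjan's algorithm and test whether any non-trivial SCC, or any single state carrying a self-loop in $G_{\bar T}$, is reachable from $s_0$; if so, arbitrarily long $T$-avoiding walks exist from $s_0$, and $V_{=1}(\de_x a)=\emptyset$. Otherwise the subgraph of $G_{\bar T}$ reachable from $s_0$ is a DAG, and a topological sort followed by linear-time dynamic programming yields the longest-walk length $\ell$ from $s_0$ in this DAG; then $V_{=1}(\de_x a)=\{x\in\Nats : x\geq \ell+1\}$, represented by $\ell+1$.

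The main conceptual step, and the only place where probabilities (rather than just the graph structure) seem to play a role, is the equivalence between probability-$1$ reachability within $x$ steps and nonexistence of a $T$-avoiding walk of length $x$ in $G$. This rests on the fact that every finite walk in $G$ has strictly positive probability by the definition of $G$, so each such walk contributes a positive-measure cylinder to the event of avoiding $T$ for $x$ steps. Once this equivalence is in place, every remaining ingredient (BFS, SCC decomposition, DAG longest-path) is a routine polynomial-time graph algorithm, giving total running time polynomial in $|S|$ independent of the numerical values of $\bfP$.
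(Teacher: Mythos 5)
Your proposal is correct and follows essentially the same route as the paper: a shortest-path computation from $s_0$ to the $a$-states for $V_{>0}(\de_x\,a)$, and detection of a reachable cycle among non-$a$-states followed by a longest-path computation in the resulting DAG for $V_{=1}(\de_x\,a)$ (the paper phrases this by collapsing the $a$-states into an absorbing target $t$, which is only a cosmetic difference, and your $\ell+1$ matches its longest path to $t$). One small caution: your blanket preliminary claim that a measurable set of infinite paths has probability zero iff none of its members is realizable in $G$ is false in general (a single realizable infinite path typically has measure zero); this does not affect your argument, which only ever uses the correct direction for finite cylinders.
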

\begin{proof}
Collapse all the $a$-states into a target state $t$ and make $t$ absorbing.
If $V_{>0}(\de_{x} \, a)$ is non-empty, it suffices to determine $\min V_{>0}(\de_{x} \, a)$ which equals the length of a shortest path from $s_0$ to $t$.
To determine whether $V_{=1}(\de_{x} \, a)$ is empty or not, we proceed as follows.
If a cycle without $t$ is reachable from $s_0$, then no finite $n$ exists for which the probability of reaching $t$ within $n$ steps equals one. 
Thus, $V_{=1}(\de_x \, a) = \emptyset$.
If this is not the case, then the graph of $M$ is a DAG (apart from the self-loop at $t$), and $\min V_{=1}(\de_x \, a)$ equals the length of a 
longest path from $s_0$ to $t$. \hfill $\quad\blacksquare$
\end{proof}

\section{The Fragment pLTL(F,X)}
\label{sec:pltlfx}

This section considers the fragment pLTL(F,X) which is defined by:
$$
\varphi \ ::= \ a \where \neg a \where \varphi \, \wedge \, \varphi \where \varphi \, \vee \, \varphi \where \X\! \varphi \where \de \varphi \where \de_{\leq x} \, \varphi \where \de_{\leq c} \, \varphi
$$ 
Our first result is a necessary and sufficient condition for the emptiness of $V_{{>} 0}(\varphi)$.

\begin{theorem}\label{FX}
For $\varphi \in$ pLTL(F,X) and MC $M$ with $m$ states, 
$V_{{>} 0}(\varphi) \neq \emptyset \mbox{ iff } \bar{{\a v}} \in V_{{>} 0}(\varphi)$ with $\bar{{\a v}}(x) = m{\cdot}|\varphi|$.
\end{theorem}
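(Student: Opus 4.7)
The ($\Leftarrow$) direction is immediate, so I focus on ($\Rightarrow$). My first step is to observe that for any fixed $\ve$, the LTL formula $\ve(\varphi)$ defines an \emph{open} set of infinite words in $\Sigma^\omega$: pLTL(F,X) admits only $\wedge, \vee, \X, \de, \de_{\leq c}, \de_{\leq x}$ together with literals, each of which preserves openness. Hence $\Pr(M \models \ve(\varphi)) > 0$ is equivalent to the existence of a single path $\pi$ of $M$ starting at $s_0$ with $\pi \models \ve(\varphi)$, since any such $\pi$ sits in a positive-probability cylinder entirely contained in the open set.

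Assuming $V_{>0}(\varphi) \neq \emptyset$, I would fix such a $\ve$ and $\pi$, and encode the satisfaction by a finite \emph{witness tree} $T$ obtained from the syntax tree of $\varphi$ by selecting one branch at each $\vee$; every node of $T$ carries a position along $\pi$ (inherited for $\wedge$ and $\vee$; plus one for $\X$; a chosen successor within the prescribed bound for the $\de$-operators). Since $T$ has at most $|\varphi|$ nodes, only $K \leq |\varphi|$ distinct positions arise, sorted as $0 = p_0 < p_1 < \cdots < p_K$ with waypoint states $t_i = \pi[p_i]$. For each $i$, $t_{i+1}$ is reachable from $t_i$ along the segment of $\pi$, so the graph of $M$ contains a shortest such path of length at most $m{-}1$ (exactly one for $\X$-hops). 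Concatenating these shortest hops and extending arbitrarily after $t_K$ yields a path $\pi'$ of $M$ whose compressed waypoint positions $0 = p'_0 < \cdots < p'_K$ satisfy $p'_{i+1} - p'_i \leq m-1$, hence $p'_K \leq m \cdot |\varphi|$. I then set $\bar{\ve}(x) := m \cdot |\varphi|$ and transport $T$ onto $\pi'$ by using positions $p'_i$ in place of $p_i$; atomic, boolean, and $\X$-requirements are immediate because waypoint states and the immediate successors after $\X$-hops are unchanged.

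Verifying $\pi' \models \bar{\ve}(\varphi)$ then reduces to bounding the compressed span of every $\de_{\leq x}$-node, which is the main obstacle. If such a node sits at position $p_a$ with its child at $p_b$, then on $\pi'$ the span is $p'_b - p'_a \leq (b-a)(m-1) \leq K \cdot m \leq m \cdot |\varphi| = \bar{\ve}(x)$, so the compressed witness respects every parametric bound. The subtlety to handle carefully is that $\wedge$-nodes cause the two sub-witnesses to branch from a shared position while their descendant waypoints interleave along $\pi$, so an intermediate waypoint on $\pi'$ may lie within the $\de_{\leq x}$-span of one branch while originating from the other; this is harmless precisely because the total number of intermediate waypoints is bounded by $K \leq |\varphi|$, which lets the uniform factor $m \cdot |\varphi|$ absorb all such interleavings. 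The resulting witness establishes $\pi' \models \bar{\ve}(\varphi)$, and hence $\bar{\ve} \in V_{>0}(\varphi)$.
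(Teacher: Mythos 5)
Your proof is correct, but it reaches the bound $m\cdot|\varphi|$ by a genuinely different, automaton-free route. The paper first rewrites $\varphi$ in disjunctive normal form, proves a lemma that a finite path of length $l$ satisfying the unparametrized formula $\bar\varphi$ also satisfies $\varphi$ under the valuation assigning $l$ to every variable, and then builds a partially ordered deterministic B\"uchi automaton $A_{\bar\varphi}$ of diameter at most $|\bar\varphi|$; the length bound comes from the diameter of the product $M\otimes A_{\bar\varphi}$, which is at most $m\cdot|\varphi|$. You instead argue directly on a syntactic witness tree: the co-safety (openness) observation reduces positive probability to the existence of a single witness path, the tree pins down at most $|\varphi|$ waypoints, and shortest-path surgery between consecutive waypoints compresses the witness to length at most $m\cdot|\varphi|$, after which every parametric span is bounded by the total length --- exactly the role played by the paper's lemma in step (2). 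Your branch selection at $\vee$-nodes is the counterpart of the paper's DNF reduction, and your waypoint count is the counterpart of the DBA diameter; what your version buys is that it avoids the automaton and product construction entirely and makes the ``at most $|\varphi|$ milestones, each within $m$ steps of the previous'' argument explicit on the graph of $M$ itself. One small point you should make explicit: the fragment also admits the constant-bounded modality $\de_{\leq c}$, whose spans must not \emph{grow} under compression; this is unproblematic because a shortest path from $t_i$ to $t_{i+1}$ is never longer than the original segment of $\pi$, so every span weakly decreases, but your verification paragraph as written only checks the $\de_{\leq x}$-nodes.
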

\begin{proof} 
Let $\varphi$ be a pLTL(F,X)-formula and assume $V_{> 0}(\varphi) \neq \emptyset$.
By monotonicity, it suffices to prove that ${\a v}\in V_{>0}(\varphi)$ with ${\a v} \not\leq \bar{{\a v}}$ implies $\bar{{\a v}} \in V_{>0}(\varphi)$.
The proof proceeds in a number of steps. 
(1) We show that it suffices to consider formulas without disjunction. 
(2) We show that if path fragment
$\pi[0..l] \models \bar\varphi$, (where LTL(F,X)-formula $\bar\varphi$ is obtained from $\varphi$ by omitting all parameters from $\varphi$) 
then $\pi[0..l] \models {\a v}_l(\varphi)$  with ${\a v}_l(x) = l$ for every $x$.
(3) We construct a deterministic B\"uchi automaton (DBA) $A_{\bar\varphi}$ for $\bar\varphi$ such that its initial and final state are at most $| \bar\varphi |$ 
transitions apart.
(4) We show that reachability of a final state in the product of MC $M$ and DBA $A_{\bar\varphi}$ implies the existence of a finite path in $M$  
of length at most $m{\cdot}|\varphi|$ satisfying $\bar\varphi$.
See the appendix for details. \hfill $\blacksquare$
\end{proof}
 The above Theorem \ref{FX} leads to the following proposition.
\begin{proposition}\label{np}
For $\varphi \in$ pLTL(F,X), deciding if $V_{>0}(\varphi)=\emptyset$  is NP-complete. 
\end{proposition}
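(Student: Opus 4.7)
The plan is to prove NP-containment and NP-hardness separately, with the upper bound reducing to Theorem~\ref{FX}.

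For the upper bound, I would invoke Theorem~\ref{FX} directly: checking whether $V_{>0}(\varphi)=\emptyset$ is equivalent to deciding whether the single valuation $\bar{{\a v}}$, with $\bar{{\a v}}(x)=m{\cdot}|\varphi|$ for every parameter $x$, lies in $V_{>0}(\varphi)$. Substituting $\bar{{\a v}}$ into $\varphi$ yields an LTL(F,X) formula $\psi=\bar{{\a v}}(\varphi)$ of size polynomial in $m$ and $|\varphi|$, since every numerical bound $\de_{\leq c}$ now satisfies $c\leq m{\cdot}|\varphi|$. Because $M$ is a finite MC, $\Pr(M\models\psi)>0$ holds iff some infinite path of $M$ satisfies $\psi$; and for the F,X fragment such a witness may be taken to be a lasso $uv^\omega$ in $M$ whose stem $u$ and loop $v$ each have length polynomial in $m$ and $|\varphi|$ (the stem is long enough to witness all unbounded $\de$-obligations, and each $\de_{\leq c}$-obligation is resolved within $c\leq m{\cdot}|\varphi|$ steps). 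A nondeterministic algorithm then guesses such a lasso together with, at every position, the set of subformulas of $\psi$ claimed to hold there, and verifies local consistency in polynomial time.

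For NP-hardness I would reduce from 3-SAT. Given a CNF formula $\Phi=\bigwedge_{j=1}^{k}C_j$ on variables $x_1,\ldots,x_n$, I build an MC $M_\Phi$ that in its first $n$ steps flips a fair coin to select a truth value for $x_i$, labelling the resulting state with $x_i$ or $\neg x_i$, and then moves to an absorbing sink. Every assignment corresponds to a path of positive probability in $M_\Phi$. The pLTL(F,X) formula $\varphi_\Phi=\bigwedge_{j}\bigvee_{\ell\in C_j}\X^{\mathit{pos}(\ell)-1}\ell$, conjoined with a vacuous $\de_{\leq x}$-subformula to make it formally parametric, has polynomial size and satisfies: $\Phi$ is satisfiable iff some path of $M_\Phi$ satisfies $\varphi_\Phi$ iff $V_{>0}(\varphi_\Phi)\neq\emptyset$.

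The main obstacle is the polynomial-size-witness argument for NP-membership. The substituted formula $\psi$ mixes unbounded $\de$ with bounded $\de_{\leq c}$ for polynomially large $c$, so one must argue that its satisfaction along an infinite path is determined by a polynomial-length lasso. The cleanest route is to reuse the ideas already deployed in the proof of Theorem~\ref{FX}: all bounded modalities are resolved within $m{\cdot}|\varphi|$ steps, and each unbounded $\de$-obligation can be assigned a distinct witness position on the stem, keeping the total lasso length polynomial in $m$ and $|\varphi|$. Once such a witness is guessed, verifying $\pi\models\psi$ is a routine structural induction over the subformula labels attached to lasso positions.
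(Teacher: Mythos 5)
Your proposal is correct and follows essentially the same route as the paper: NP-membership via the polynomial-length witness guaranteed by Theorem~\ref{FX} (guess a short path and verify it against the instantiated/unparametrized formula, which is sound because positive-normal-form F,X properties are witnessed by finite prefixes and hence by positive-measure cylinders), and NP-hardness by reducing 3-SAT to positive-probability satisfaction over an MC whose paths enumerate truth assignments. The only cosmetic difference is the hardness gadget: the paper labels literal-states with the clauses they satisfy and uses the parametric eventualities $\de_{y_1}\, C_1 \wedge \dots \wedge \de_{y_k}\, C_k$, whereas you address literal positions with towers of $\X$ and attach a vacuous parametric subformula; both encodings work (modulo an off-by-one in your $\X^{\mathit{pos}(\ell)-1}$ indexing).
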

\begin{proof}
 See the appendix.  
 \hfill $\blacksquare$
\end{proof}

\noindent
For almost sure properties, a similar approach as for $V_{>0}(\varphi)$ suffices.
\begin{theorem}
$\!\!\!\!$ For $\varphi \in$ pLTL(F,X) and MC $M$ with $m$ states,
$V_{{=} 1}(\varphi) \neq \emptyset \mbox{ iff } \bar{ {\a v}} \in V_{{=} 1}(\bar\varphi)$ 
with $\bar{ {\a v}}(x) = m{\cdot}|\varphi|$.
\end{theorem}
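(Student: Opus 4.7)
The plan is to adapt the argument of Theorem~\ref{FX} to the almost-sure setting. The backward direction is immediate: if $\bar{\a v} \in V_{=1}(\varphi)$, then $V_{=1}(\varphi) \neq \emptyset$, and no appeal to monotonicity is required.

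For the forward direction, the starting point is the observation that $V_{=1}(\varphi)$ is upward closed by Remark~\ref{monotonicity}, so it suffices to produce a single ${\a v}^\star \in V_{=1}(\varphi)$ satisfying ${\a v}^\star(x) \leq m{\cdot}|\varphi|$ for every variable $x$; monotonicity then promotes this to $\bar{\a v} \in V_{=1}(\varphi)$. Since in negation normal form ${\a v}(\varphi)$ implies the parameter-free version $\bar\varphi$ (obtained by dropping all bounds) for every ${\a v}$, the assumption $V_{=1}(\varphi) \neq \emptyset$ already gives $\Pr(M \models \bar\varphi) = 1$. I then reuse the DBA $A_{\bar\varphi}$ from step~(3) of Theorem~\ref{FX}, with size polynomial in $|\bar\varphi|$, and form the product MC $M \otimes A_{\bar\varphi}$ of size at most $m{\cdot}|\bar\varphi|$. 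The condition $\Pr(M \models \bar\varphi) = 1$ translates to the standard requirement that every reachable BSCC of the product is accepting, which handles all unbounded $\de$-subformulas of $\varphi$ independently of the valuation.

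What remains is to find finite bounds for the bounded subformulas. Once the product reaches a state ``committed'' to some $\de_{\leq x_i}\,\psi_i$, the condition $\Pr(M \models {\a v}(\varphi)) = 1$ requires \emph{every} continuation (not merely almost every continuation, as bounded events of positive reach-probability cannot be ignored) to visit a $\psi_i$-state within ${\a v}(x_i)$ steps. This is a purely graph-theoretic condition on $M \otimes A_{\bar\varphi}$: a finite bound exists iff the $\psi_i$-avoiding sub-product reachable from the committed states is acyclic, and in that case its longest simple path has length at most $m{\cdot}|\bar\varphi| \leq m{\cdot}|\varphi|$. Setting ${\a v}^\star(x_i) = m{\cdot}|\varphi|$ for every $i$ simultaneously then yields ${\a v}^\star \in V_{=1}(\varphi)$, and monotonicity finishes the job.

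The hard part is the disjunction handling. In Theorem~\ref{FX} one reduces to the disjunction-free fragment using the identity $V_{>0}(\varphi_1 \vee \varphi_2) = V_{>0}(\varphi_1) \cup V_{>0}(\varphi_2)$, but the analogous identity fails for $V_{=1}$: a disjunction can hold almost surely even when neither disjunct does. Consequently, the ``committed'' states above must be defined in terms of the DBA states of $A_{\bar\varphi}$ (which already track the active disjunctive branches) rather than via a syntactic case split on $\varphi$, and the graph analysis of the bounded subformulas has to be carried out on $M \otimes A_{\bar\varphi}$ rather than on $M$ alone. Verifying that the $\bar\varphi$-based DBA from Theorem~\ref{FX} supplies exactly the right notion of commitment, so that the $m{\cdot}|\varphi|$ bound transfers cleanly from the $>0$ to the $=1$ setting, is the main technical step.
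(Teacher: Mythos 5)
Your proposal is correct and follows essentially the same route as the paper: both arguments reduce the question to the product $M \otimes A_{\bar\varphi}$ of size $m{\cdot}|\varphi|$ and observe that a reachable cycle avoiding the relevant accepting states defeats every finite valuation, while the absence of such a cycle bounds the required valuation by the size of the product. Your write-up is in fact more explicit than the paper's two-sentence sketch, in particular in flagging that the disjunction-elimination step of Theorem~\ref{FX} does not carry over to the almost-sure setting --- a point the paper's own proof silently elides.
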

\begin{proof}
 Consider the direction from left to right.
The argument goes along similar lines as the proof of Theorem~\ref{FX}. We build the DBA $A_{\bar\varphi}$ for $\bar\varphi$ and take the cross product with Markov chain $M$. There
are $m{\cdot}|\varphi|$ state in the cross product. If $\Pr(M\models {\bar{\ve}}(\varphi))<1$ then there is some cycle in the cross product that
does not contain the final state. Thus, $V_{=1}(\varphi)$ is empty. \hfill $\blacksquare$
\end{proof}

\noindent
Theorem~\ref{FX} suggests that $\min V_{>0}(\varphi)$ lies in the hyper-cube $H = \set{0,\hdots,N}^d$, where $N=m{\cdot}|\varphi|$. 
A possible way to find $\min V_{>0}(\varphi)$ is to apply the bisection method in $d$-dimensions.
We recursively choose a middle point of the cube, say ${\a v} \in H$ ---in the first iteration ${\a v}(x)=N/2$--- 
and divide $H$ in $2^d$ equally sized hypercubes. 
If ${{\a v}} \in V_{> 0}(\varphi)$, then the hypercube whose points exceed ${\a v}$ is discarded, else the cube whose points are below 
${\a v}$ is discarded. 
The asymptotic time-complexity of this procedure is given by the recurrence relation:
\begin{equation}\label{re}
T(k) = (2^d-1)\cdot T(k{\cdot}2^{{-}d}) + F
\end{equation} 
where $k$ is the number of points in the hypercube and $F$ is the complexity of checking ${\a v} \in V_{> 0}(\varphi)$ 
where $|\a v|\leq N$. 
Section \ref{sec:algo} presents an algorithm working in ${\cal O}(m{\cdot}N^{d}{\cdot}2^{|\varphi|})$ for a somewhat more expressive logic. 
From (\ref{re}), this yields a complexity of ${\cal O}(m{\cdot}N^d{\cdot}2^{|\varphi|}{\cdot}\log N )$. 
The size of a set of minimal points can be exponential in the number of variables, as shown below.

\begin{proposition}\label{maxbound}
$|\min V_{>0}(\varphi)| \leq  (N{\cdot}d)^{d-1}$.
\end{proposition}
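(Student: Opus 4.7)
The approach is to reduce the cardinality bound to an antichain problem in the finite hypercube $H = \{0, 1, \ldots, N\}^d$ and then apply a coordinate-projection count.

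First, since pLTL(F,X) embeds into pLTL$_\de$ via $\de \varphi \equiv \true \U \varphi$, Remark~\ref{monotonicity} applies and yields that $V_{>0}(\varphi) \subseteq \Nats^d$ is upward-closed under the coordinate-wise order. Consequently $\min V_{>0}(\varphi)$ is an antichain in $\Nats^d$ with respect to this order.

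Second, I would show that every $v^* \in \min V_{>0}(\varphi)$ lies in $H$. Suppose for contradiction that $v^*(x_1) > N$ for some variable $x_1$. Instantiating every variable $x \neq x_1$ in $\varphi$ by the constant $v^*(x)$ (a legal operation since $\de_{\leq c}$ is part of the pLTL(F,X) grammar) yields a pLTL(F,X) formula $\varphi'$ with $|\varphi'| = |\varphi|$ and a single free variable $x_1$. Since $v^*(x_1) \in V_{>0}(\varphi')$, this set is non-empty, so Theorem~\ref{FX} gives $N \in V_{>0}(\varphi')$. Defining $v'(x_1) = N$ and $v'(x) = v^*(x)$ for $x \neq x_1$ then yields $v' \in V_{>0}(\varphi)$ with $v' < v^*$, contradicting the minimality of $v^*$.

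Third, I would count: $\min V_{>0}(\varphi)$ is an antichain in $H$, and in any antichain any two distinct elements must differ in at least two coordinates (otherwise one would dominate the other). Hence the projection of the antichain onto any $d{-}1$ coordinates is injective, giving $|\min V_{>0}(\varphi)| \leq (N{+}1)^{d-1} \leq (N \cdot d)^{d-1}$.

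The main obstacle is the second step: the bookkeeping needed to ensure that substituting constants for all but one variable leaves the formula syntactically within pLTL(F,X) without increasing its size, so that Theorem~\ref{FX} yields precisely the threshold $N = m \cdot |\varphi|$ for the remaining variable rather than a larger bound depending on the instantiated constants.
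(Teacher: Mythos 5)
Your proof is correct and reaches the stated bound by a genuinely different (and more elementary) counting argument than the paper's. Both proofs start from the same observation that, by monotonicity, $\min V_{>0}(\varphi)$ is an antichain in the hypercube $H=\{0,\ldots,N\}^d$; but where you use the elementary fact that an antichain projects injectively onto any $d{-}1$ coordinates (two elements agreeing on $d{-}1$ coordinates would be comparable), the paper invokes a Sperner-type theorem for the product poset $H$ to bound the antichain by the size of its middle rank level, $\binom{\lfloor N\cdot d/2\rfloor + d-1}{d-1}$. Your route avoids the external combinatorial citation and in fact yields the sharper bound $(N{+}1)^{d-1}\leq (N\cdot d)^{d-1}$. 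You also supply something the paper omits: an actual argument that every minimal valuation lies in $H$ (the paper merely says Theorem~\ref{FX} ``suggests'' this, and relegates a per-coordinate truncation remark to the end of the appendix proof of Theorem~\ref{FX}). Your substitution argument --- freeze all but one variable at its value in $v^*$, note that $\de_{\leq c}$ is in the grammar, and apply Theorem~\ref{FX} to the resulting one-variable formula --- accomplishes the same truncation without requiring the other coordinates to already be below $N$. The caveat you flag about $|\varphi'|=|\varphi|$ is benign under the paper's implicit convention that a constant subscript contributes unit size to $|\varphi|$; if one worried about it, one could instead use the paper's own closing remark in the appendix proof of Theorem~\ref{FX}, which truncates one oversized coordinate at a time directly on the product automaton.
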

\begin{proof}   See the appendix.
\hfill $\blacksquare$\end{proof} 
\begin{figure}
\scalebox{0.7}{
\begin{tikzpicture}
 \foreach \x in {1,...,4}{
  \draw (\x,0) circle (2pt); \draw (\x+4,0) circle (2pt);
  
  \filldraw[red] (\x,4-\x) circle (2pt); \node at (\x,4.3-\x) {$r$}; \draw (\x+1,4-\x) circle (2pt); \path[->] (\x+0.1,4-\x) edge (\x+0.9,4-\x);

  \filldraw[blue] (\x+4,4-\x) circle (2pt); \node at (\x+4,4.3-\x) {$b$}; \draw (\x+1+4,4-\x) circle (2pt); \path[->] (\x+0.1+4,4-\x) edge (\x+0.9+4,4-\x);

  \ifthenelse{\x<4}{ \path[->] (\x+1.1,3.9-\x) edge (\x+1.9,3.1-\x); \draw (\x+1.5,3.5-\x) circle (2pt);
             \path[->] (\x,0.1) edge (\x,3.9-\x); \path[->] (\x+0.1,0) edge (\x+0.9,0); \path[->] (\x+4.1,0) edge (\x+4.9,0);

             \path[->] (\x+1+4.1,3.9-\x) edge (\x+1.9+4,3.1-\x); \draw (\x+4+1.5,3.5-\x) circle (2pt);
             \path[->] (\x+4,0.1) edge (\x+4,3.9-\x); \path[->] (\x+0.1+4,0) edge (\x+0.9+4,0); \path[->] (\x+4+4.1,0) edge (\x+4.9+4,0);
 
             \draw (9+\x,0) circle (2pt);
            }{} 
}
\filldraw[red] (4,0) circle (2pt); \draw (3.5,0) circle (2pt);
\filldraw[blue] (8,0) circle (2pt); \draw (7.5,0) circle (2pt);
\filldraw[green] (12,0) circle (2pt); \node at (12,0.3) {$g$};

\node at (11,4) {$x_1$}; \node at (12,4) {$x_2$}; \node at (13,4) {$x_3$};
\draw (10.5,3.7) -- (13.5,3.7);
\draw (11.5,4.2) -- (11.5,0.9);
\draw (12.5,4.2) -- (12.5,0.9);

\node at (11,3.4) {\footnotesize{5}}; \node at (12,3.4) {\footnotesize 10}; \node at (13,3.4) {\footnotesize 14};
\node at (11,3.1) {\footnotesize{5}}; \node at (12,3.1) {\footnotesize 9}; \node at (13,3.1) {\footnotesize 15};
\node at (11,2.8) {\footnotesize{5}}; \node at (12,2.8) {\footnotesize 8}; \node at (13,2.8) {\footnotesize 16};
\node at (11,2.5) {\footnotesize{5}}; \node at (12,2.5) {\footnotesize 7}; \node at (13,2.5) {\footnotesize 17};

\node at (11,2) {\footnotesize{4}}; \node at (12,2) {\footnotesize 11}; \node at (13,2) {\footnotesize 15};
\node at (11,1.7) {\footnotesize{4}}; \node at (12,1.7) {\footnotesize 10}; \node at (13,1.7) {\footnotesize 16};
\node at (11,1.4) {\footnotesize{4}}; \node at (12,1.4) {\footnotesize 9}; \node at (13,1.4) {\footnotesize 17};
\node at (11,1.1) {\footnotesize{4}}; \node at (12,1.1) {\footnotesize 8}; \node at (13,1.1) {\footnotesize 18};

\node at (14.5,4) {$x_1$}; \node at (15.5,4) {$x_2$}; \node at (16.5,4) {$x_3$};
\draw (14,3.7) -- (17,3.7);
\draw (15,4.2) -- (15,0.9);
\draw (16,4.2) -- (16,0.9);

\node at (14.5,3.4) {\footnotesize{3}}; \node at (15.5,3.4) {\footnotesize 10}; \node at (16.5,3.4) {\footnotesize 16};
\node at (14.5,3.1) {\footnotesize{3}}; \node at (15.5,3.1) {\footnotesize 11}; \node at (16.5,3.1) {\footnotesize 17};
\node at (14.5,2.8) {\footnotesize{3}}; \node at (15.5,2.8) {\footnotesize 10}; \node at (16.5,2.8) {\footnotesize 18};
\node at (14.5,2.5) {\footnotesize{3}}; \node at (15.5,2.5) {\footnotesize 9}; \node at (16.5,2.5) {\footnotesize 19};

\node at (14.5,2) {\footnotesize{2}}; \node at (15.5,2) {\footnotesize 13}; \node at (16.5,2) {\footnotesize 17};
\node at (14.5,1.7) {\footnotesize{2}}; \node at (15.5,1.7) {\footnotesize 12}; \node at (16.5,1.7) {\footnotesize 18};
\node at (14.5,1.4) {\footnotesize{2}}; \node at (15.5,1.4) {\footnotesize 11}; \node at (16.5,1.4) {\footnotesize 19};
\node at (14.5,1.1) {\footnotesize{2}}; \node at (15.5,1.1) {\footnotesize 10}; \node at (16.5,1.1) {\footnotesize 20};
\end{tikzpicture}}
\vspace{-0.5cm}\\
\caption{MC and $\min V_{> 0}(\varphi)$ for pLTL(F,X)-formula $\varphi = \de_{x_1} \, \mbox{\sl r} \wedge \de_{x_2} \, \mbox{\sl b} \wedge \de_{x_3} \, \mbox{\sl g}$}

\label{maxfig}
 \end{figure} 
\begin{example}\label{maxex}
There exist MCs for which $| \min V_{>0}(\varphi) |$ grows exponentially in $d$, the number of parameters in $\varphi$, whereas the number $m$ of states in the MC grows linearly in $d$.
For instance, consider the MC $M$ in Fig.~\ref{maxfig} and $\varphi = \de_{x_1} \, r \wedge \de_{x_2} \, b \wedge \de_{x_3} \, g$, i.e., $d{=}3$.
We have $| \min V_{>0}(\varphi) | = 4^2$ as indicated in the table.
\end{example}

\noindent 
We conclude this section by briefly considering the membership query: does ${ {\a v}} \in V_{> 0}(\varphi)$ for pLTL(F,X)-formula 
$\varphi$ with $d$ parameters?
Checking membership of a valuation $\ve\in V_{>0}(\varphi)$ boils down to deciding whether there exists a $\ve'\in \min V_{>0}(\varphi)$ such that $\ve \geq \ve'$.
A representation of $\min V_{> 0}(\varphi)$ facilitating an efficient membership test can be obtained by putting all elements in this set in lexicographical order.
This involves sorting over all $d$ coordinates.
A membership query then amounts to a recursive binary search over $d$ dimensions.
This yields:

\begin{proposition}
For pLTL(F,X)-formula $\varphi$, ${ {\a v}} \in V_{>0}(\varphi)?$ takes ${\cal O}(d{\cdot}\log N{\cdot}d)$ time, provided a representation of $\min V_{>0}(\varphi)$ is given.
\end{proposition}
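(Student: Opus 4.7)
The plan is to exploit the monotonicity of pLTL$_\de$-formulas (Remark~\ref{monotonicity}): since $\varphi \in \mbox{pLTL(F,X)}$ is also in pLTL$_\de$, the set $V_{>0}(\varphi)$ is upward closed, so ${\a v} \in V_{>0}(\varphi)$ iff there exists some ${\a v}' \in \min V_{>0}(\varphi)$ with ${\a v}' \leq {\a v}$. Thus the membership test reduces to a $d$-dimensional dominance query: given ${\a v} \in \{0,\ldots,N\}^d$, decide whether any point of $\min V_{>0}(\varphi)$ is pointwise dominated by ${\a v}$, where $N = m \cdot |\varphi|$ by Theorem~\ref{FX}.

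The first step is to pre-process $\min V_{>0}(\varphi)$ into a lexicographically sorted multi-level structure (a $d$-fold nesting, essentially a range-tree-style organization): at level $1$ the points are sorted by coordinate $x_1$; within each maximal block sharing the same $x_1$-value the points are sorted by $x_2$; and so on through all $d$ coordinates. This is the sort ``over all $d$ coordinates'' alluded to in the paragraph preceding the proposition, and it is assumed to be given.

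The membership query then proceeds recursively across the $d$ coordinates. At level $i$, given the current sublist of candidates (those already known to be dominated in coordinates $x_1,\ldots,x_{i-1}$), binary search on the $i$-th coordinate locates the prefix of candidates with $i$-th coordinate at most ${\a v}(x_i)$; the query then recurses on that prefix at level $i{+}1$. The answer ``yes'' is reported as soon as level $d$ returns a non-empty sublist. Each of the $d$ recursive levels performs a binary search using $\mathcal{O}(\log N)$ comparisons (coordinates lie in $\{0,\ldots,N\}$), and each comparison of a candidate tuple against ${\a v}$ costs $\mathcal{O}(d)$, yielding the claimed $\mathcal{O}(d \cdot \log N \cdot d)$ bound.

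The main obstacle is soundness of the recursion, i.e.\ making sure that restricting to a lex-sorted prefix at each level does not discard a genuinely dominating point. This is exactly why the pre-processing nests the sorts so that at level $i$ the ``active'' sublist is contiguous and itself sorted on $x_{i+1}$; once this invariant is established, correctness of the nested binary search follows directly, and the complexity bound drops out from the recursion depth, binary-search cost per level, and tuple-comparison cost.
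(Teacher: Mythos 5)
Your proposal follows the same route the paper itself sketches: reduce membership to a dominance query against the antichain $\min V_{>0}(\varphi)$ via monotonicity, lexicographically sort that set, and answer the query by a $d$-level recursive binary search. (The paper offers no proof beyond the two sentences preceding the proposition, so you have faithfully reproduced its argument.) The problem is that the step you yourself single out as ``the main obstacle'' is a genuine gap, and the way you dismiss it does not work. After the level-$1$ binary search, the surviving candidates form the prefix of the lex-sorted list whose first coordinate is at most $\ve(x_1)$; this prefix is \emph{not} sorted on $x_2$ --- it is a concatenation of up to $N{+}1$ blocks, each internally sorted on $x_2$ but with no ordering across blocks. So the invariant ``the active sublist is contiguous and itself sorted on $x_{i+1}$'' already fails at level $2$ when $d\geq 3$; a single binary search on $x_2$ over the whole prefix is unsound, and recursing into every block separately multiplies the cost by the number of blocks at each level, destroying the claimed $\mathcal{O}(d\cdot\log N\cdot d)$ bound.

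The case $d=2$ is rescued by the antichain structure: among the minimal points, increasing $x_1$ forces decreasing $x_2$, so the point with the largest first coordinate not exceeding $\ve(x_1)$ has the smallest second coordinate in the prefix, and one comparison finishes the query. This monotonicity does not propagate to higher dimensions: the projection of a $d$-dimensional antichain onto coordinates $x_2,\ldots,x_d$ need not be an antichain (e.g.\ $(3,2,3)$ and $(2,2,4)$ are incomparable, yet their projections $(2,3)$ and $(2,4)$ are comparable), so no analogous shortcut exists at deeper levels. To make the recursion sound one needs a genuinely nested secondary structure in the sense of a range tree, whose query time is polylogarithmic in $|\min V_{>0}(\varphi)|$ rather than in $N$; since by Proposition~\ref{maxbound} that set can have size $(N\cdot d)^{d-1}$, this does not yield the stated complexity either. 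As written, neither your argument nor the paper's establishes the claimed bound for $d\geq 3$; to repair the proposition you would need either a stronger notion of ``representation'' than a lexicographic sort, or a weaker (dimension-dependent) query bound.
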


\section{Qualitative Parametric B\"uchi}
\label{sec:parambuechi}

In this section, we consider pLTL-formulas of the form $\varphi = \Box\de_x \, a$, for proposition $a$. 
We are interested in $V_{>0}(\varphi)$, i.e., does the set of infinite paths visiting $a$-states that are maximally $x$ apart infinitely often, have a positive measure?
Let MC $M = (S, \bfP, s_0, L)$. 
A \emph{bottom strongly-connected component} (BSCC) $B \subseteq S$ of $M$ is a set of mutually reachable states with no edge leaving $B$. 
For BSCC $B$, let $n_{a,B} = \max\set{ |\pi| \mid \forall i \leq |\pi|, \pi[i] \in B \wedge a \notin L(\pi[i])}$. 
%

\begin{proposition}\label{buchi}
Let $B$ be a BSCC and $s \in B$.  Then, $\forall n\in\mathbb{N}, n > n_{a,B} \Leftrightarrow \Pr(s\models \Box\de_n\ a) =1$ and $n \leq n_{a,B} \Leftrightarrow \Pr(s\models \Box\de_n\ a) =0$.
\end{proposition}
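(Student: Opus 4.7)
The plan is to prove the two forward implications
\[
n > n_{a,B} \;\Longrightarrow\; \Pr(s \models \Box \de_n\, a) = 1
\quad\text{and}\quad
n \leq n_{a,B} \;\Longrightarrow\; \Pr(s \models \Box \de_n\, a) = 0;
\]
the converses then follow automatically, since the two antecedents partition $\mathbb{N}$ and the two consequents are mutually exclusive. Throughout I use two standard facts about a finite BSCC $B$: starting from any $s \in B$, every trajectory stays in $B$ forever, and every state of $B$ is visited infinitely often with probability one.

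For $n > n_{a,B}$, I prove the stronger \emph{sure} statement that every trajectory $\pi$ from $s$ satisfies $\Box \de_n\, a$. By definition of $n_{a,B}$ no path in $B$ longer than $n_{a,B}$ is $a$-free, so for every index $i$ the $n+1$ consecutive states $\pi[i], \pi[i+1], \ldots, \pi[i+n]$ (which all lie in $B$) must contain some $a$-state. This is exactly $\pi \models \Box \de_n\, a$, hence the probability is $1$.

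For $n \leq n_{a,B}$, I pick a witness $a$-free path $\pi^* = s_0^* s_1^* \cdots s_k^*$ in $B$ whose length is at least $n$, so that $\pi^*$ already contains an entirely $a$-free window of $n{+}1$ consecutive states. Since $B$ is a finite BSCC, $s_0^*$ is hit infinitely often from $s$ almost surely, and by the strong Markov property each visit to $s_0^*$ is followed by the exact trace $\pi^*$ with the fixed positive probability $q = \prod_{i < k} \bfP(s_i^*, s_{i+1}^*)$. A second Borel--Cantelli / geometric argument over these independent trials then shows that $\pi^*$ occurs as a contiguous sub-trajectory (in fact infinitely often) with probability one; at the starting index of any such occurrence $\de_n\, a$ fails, so $\Box \de_n\, a$ is violated and the probability is $0$.

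The main obstacle is not substantive but is almost entirely bookkeeping: choosing a convention for $|\pi|$ (states vs.\ edges) so that the phrases ``length $\geq n$'' and ``$n{+}1$-state window'' match up cleanly, and justifying the almost-sure contiguous occurrence of $\pi^*$ via the strong Markov property at the successive hitting times of $s_0^*$. No probabilistic tool beyond finite-BSCC recurrence is required.
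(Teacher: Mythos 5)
Your proposal is correct and follows essentially the same route as the paper: the $n > n_{a,B}$ case is the identical sure-satisfaction window argument, and the $n \leq n_{a,B}$ case uses the same witness $a$-free fragment plus infinite recurrence of its start state, with your Borel--Cantelli/strong-Markov sketch merely spelling out the step the paper delegates to a cited theorem on almost-sure occurrence of finite path fragments. Your explicit remark that the two forward implications suffice (since the antecedents partition $\mathbb{N}$ and the consequents are mutually exclusive) is a small but welcome clarification the paper leaves implicit.
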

\begin{proof}
If $n > n_{a,B}$, then each path $\pi$ from any state $s \in B$ will have at least one $a$-state in finite path fragment $\pi[i,\hdots,i{+}n]$ for all $i$. 
Hence, $\Pr(s\models\Box\de_n \ a) =1$. 
If $n\leq n_{a,B}$, then there exists a finite path fragment $\rho$ of $B$, such that, for all $i \leq n$, $a \notin L(\rho[i])$. 
Consider an infinite path $\pi$ starting from any arbitrary $s \in B$. 
As $s \in B$, $\pi$ will almost surely infinitely often visit the initial state of $\rho$.
Therefore, by~\cite[Th.10.25]{DBLP:books/daglib/0020348}, $\pi$ will almost surely visit every finite path fragment starting in that state, in particular $\rho$.
Path $\pi$ thus almost surely refutes $\Box\de_n\  a$, i.e. $\Pr(s\models \Box\de_n\ a) =0$.
\hfill $\blacksquare$ \end{proof}

\noindent
For any BSCC $B$ and $\Box\de_x\ a$, $n_{a,B}<\infty$ iff every cycle in $B$ has at least one $a$-state.
Hence, $n_{a,B}$ can be obtained by analysing the digraph of $B$ (in ${\cal O}(m^2)$, the number of edges).
BSCC $B$ is called \emph{accepting} for $\Box\de_x \, a$ if $n_{a,B}<\infty$ and $B$ is reachable from the initial state $s_0$. 
Note that this may differ from being an accepting BSCC for $\Box\de a$.
Evidently, $V_{>0}(\pmb\Box\de_x \,a) \neq \emptyset$ iff $n_{a,B} < \infty$. 
This result can be extended to \emph{generalized B\"uchi} formula
$\varphi = \Box\de_{x_1} \, a_1 \wedge \hdots \wedge \Box\de_{x_d} \, a_d$, by checking $n_{a_i,B}< \infty$ for each $a_i$. 

As a next problem, we determine $\min V_{>0}(\Box\de_{x} \, a)$. For the sake of simplicity, let MS $M$ have a single
accepting BSCC $B$.
For states $s$ and $t$ in MC $M$, let $d(s,t)$ be the distance from $s$ to $t$ in the graph of $M$.
(Recall, the distance between state $s$ and $t$ is the length of the shortest path from $s$ to $t$.)
For BSCC $B$, let $d_{a,B}(s) = \min_{t \in B, a \in L(t)} d(s,t)$, i.e., the minimal distance from $s$ to an $a$-state in $B$.
Let the proposition $a_B$ hold in state $s$ iff $s \in B$ and $a \in L(s)$.
Let $G_a = (V,E)$ be the digraph defined as follows: $V$ contains all $a$-states of $M$ and the initial state $s_0$ and $(s,s') \in E$ iff there is path from $s$ to $s'$ in $M$.
Let $c$ be a cost function defined on a finite path $s_0 \ldots s_n$ in graph $G_a$ as: $c(s_0 \ldots s_n) = \max_{i} d(s_i,s_{i+1})$, 
($d$ is defined on the graph of $M$). 
Using these auxiliary notions we obtain the following characterization for $\min V_{>0}(\Box\de_{x} \, a)$:

%
%
%
%

\begin{theorem}\label{buchimain}
$\min V_{>0}(\Box\de_x \, a) = n_0$ where $n_0 = \displaystyle\max \tiny{\left( n_{a,B}, \min_{\pi = s_0 \ldots s_n, s_n \models a_B} c(\pi) \right)}$ 
if $n_{a,B} < d_{a,B}(s_0)$ and $n_0 = n_{a,B}$ otherwise.
\end{theorem}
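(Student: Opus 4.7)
The plan is to establish both $\min V_{{>}0}(\Box\de_x \, a) \leq n_0$ (\emph{sufficiency}) and $\min V_{{>}0}(\Box\de_x \, a) \geq n_0$ (\emph{necessity}), using throughout that the unique accepting BSCC $B$ is reached from $s_0$ almost surely.

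For sufficiency I would exhibit a finite path prefix $\sigma$ in $M$ from $s_0$ to some state $t \in B$ with $a \in L(t)$, along which every gap between consecutive $a$-visits is at most $n_0$. Concretely, pick a $c$-minimising path $\pi^{*} = s_0 t_1 \ldots t_r$ in $G_a$ with $t_r \models a_B$; in the \emph{otherwise} branch one may even take the trivial path $s_0 t_1$ where $t_1$ is a nearest $a_B$-state, since its cost $d_{a,B}(s_0) \leq n_{a,B} = n_0$. Realising each hop $(t_i,t_{i+1})$ by a shortest $M$-path of length $d(t_i,t_{i+1}) \leq c(\pi^{*}) \leq n_0$ then produces a $\sigma$ with strictly positive cylinder probability, ending inside $B$, whose $a$-gaps are all bounded by $n_0$. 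Because $n_0 \geq n_{a,B}$, Proposition~\ref{buchi} ensures that continuations of $\sigma$ inside $B$ satisfy $\Box\de_{n_0} \, a$ almost surely, so combining prefix and suffix yields $\Pr(M \models \Box\de_{n_0} \, a) > 0$.

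For necessity, fix $n < n_0$ and work on the measure-one event that the trajectory is eventually trapped in $B$. I split on the two possible reasons $n < n_0$. If $n < n_{a,B}$, Proposition~\ref{buchi} gives $\Pr(s \models \Box\de_n \, a) = 0$ for every $s \in B$, and the strong Markov property at the hitting time of $B$ yields $\Pr(M \models \Box\de_n \, a) = 0$. Otherwise $n \geq n_{a,B}$, which only occurs in the \emph{if} branch of the case split and forces $n < \min_\pi c(\pi)$. For any trajectory $\sigma$ satisfying $\Box\de_n \, a$, absorption into $B$ forces $\sigma$ to hit an $a_B$-state; the finite sequence consisting of $s_0$ followed by the $a$-positions of $\sigma$ up to the first $a_B$-visit is a legitimate path $\pi_\sigma$ in $G_a$ ending in $a_B$, and each edge weight $d(\sigma[i_j],\sigma[i_{j+1}])$ is bounded by the actual MC step-gap $i_{j+1}{-}i_j \leq n$. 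Hence $c(\pi_\sigma) \leq n < \min_\pi c(\pi)$, a contradiction, so the event $\Box\de_n \, a$ has measure zero.

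The principal obstacle is the precise correspondence between MC trajectories and paths in the auxiliary digraph $G_a$: I must verify that the sequence of $a$-visits along any MC trajectory satisfying $\Box\de_n \, a$ always induces a legal $G_a$-path whose $d$-weights are dominated by the true inter-$a$ step counts in $M$, so that the graph quantity $\min_\pi c(\pi)$ is a sound lower bound on admissible valuations. Once this lifting is in place, the case split on whether $n_{a,B}$ or $\min_\pi c(\pi)$ dominates becomes routine, and both directions close cleanly via Proposition~\ref{buchi}.
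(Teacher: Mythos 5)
Your proposal is correct and follows essentially the same route as the paper's proof: a case split on whether $n_{a,B}$ dominates $d_{a,B}(s_0)$, sufficiency by realising a $c$-minimising $G_a$-path as a positive-probability prefix of $M$ with all $a$-gaps bounded by $n_0$ and then appealing to Proposition~\ref{buchi} for the almost-sure suffix inside $B$, and necessity via Proposition~\ref{buchi} when $n < n_{a,B}$ and via the contraposition that projects the $a$-visits of a satisfying trajectory onto a $G_a$-path of cost at most $n$ when $n_{a,B} \leq n < c_{\min}$. The ``lifting'' step you single out as the principal obstacle is exactly the paper's own contradiction argument, so nothing essential is missing.
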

\begin{proof}
See the appendix. \hfill $\blacksquare$
 \end{proof}

\noindent
If MC $M$ has more than one accepting BSCC, say $\set{B_1, \ldots, B_k}$ with $k > 1$, then $n_0 = \min_i n_{0,B_i}$, where $n_{0,B_i}$  for $0 < i \leq k$ is obtained as in Theorem \ref{buchimain}.

\begin{proposition}
The sets $V_{>0}(\Box \de_{x} \, a)$ and $V_{=1}(\Box \de_{x} \, a)$ can be determined in polynomial time by a graph analysis of MC $M$.
\end{proposition}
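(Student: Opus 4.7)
The plan is to reduce both problems to standard graph-theoretic primitives (BSCC decomposition, BFS, longest paths in DAGs, and bottleneck shortest paths) and check that each primitive runs in polynomial time.

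For $V_{>0}(\Box\de_x\, a)$, I would invoke Theorem~\ref{buchimain} and argue that every quantity appearing in its statement is polynomial-time computable from the underlying digraph of $M$. Concretely: (i) compute the BSCC decomposition of $M$ by Tarjan's algorithm; (ii) for each BSCC $B$, decide accepting-ness by checking whether the subgraph of $B$ induced by non-$a$-states is acyclic (DFS); when acyclic, $n_{a,B}$ is the length of the longest path in this DAG, obtainable by topological sort; (iii) obtain the distances $d(s,t)$ and $d_{a,B}(s_0)$ by BFS on the graph of $M$; (iv) compute $\min_{\pi}c(\pi)$ in the auxiliary digraph $G_a$ with edge weights $d(\cdot,\cdot)$ — this is a bottleneck (minimax) shortest-path problem, solvable in polynomial time by a Dijkstra variant or by iteratively thresholding edge weights. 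When several accepting BSCCs are reachable, take the minimum across them. Emptiness of $V_{>0}$ then reduces to the existence of any reachable accepting BSCC.

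For $V_{=1}(\Box\de_x\, a)$, I would first establish the structural characterization: $V_{=1}(\Box\de_x\, a) \neq \emptyset$ iff the subgraph of $M$ induced by the reachable non-$a$-states is acyclic, and in that case $\min V_{=1}(\Box\de_x\, a)$ equals the length $K$ of the longest path in this DAG. The ``if'' direction is immediate: a longest run of consecutive non-$a$-states of length $K$ means $\Box\de_K a$ is satisfied on every path, deterministically. The ``only if'' direction is the delicate one: if a cycle of non-$a$-states is reachable from $s_0$, then for every $k$ there is a path prefix of positive probability that enters this cycle and traverses it enough times to produce $k$ consecutive non-$a$-states, so $\Pr(M\models \Box\de_n a) < 1$ for all $n$. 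Note that we must distinguish ``almost surely staying in the cycle forever'' (which does not happen for a transient cycle) from ``with positive probability producing an arbitrarily long non-$a$ run'' (which does, since each extra lap has positive probability). Once the characterization is established, acyclicity of the reachable non-$a$ subgraph and longest-path in a DAG are both polynomial.

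The main obstacle is getting the $V_{=1}$ characterization correct, and in particular resisting the tempting but wrong analogue of Proposition~\ref{buchi} that only looks inside BSCCs: transient non-$a$ cycles also destroy almost-sure satisfaction even though almost no path loops in them forever. Once this point is handled, the proposition follows by collecting the polynomial-time subroutines above.
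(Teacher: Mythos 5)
Your argument is correct, and for $V_{>0}(\Box\de_x\,a)$ it is essentially the paper's proof: the paper likewise reduces everything to Theorem~\ref{buchimain}, computes $n_{a,B}$ and $d_{a,B}(s_0)$ by elementary graph analysis, obtains the pairwise distances (via Floyd--Warshall rather than your BFS, which is immaterial), and computes $c_{min}$ by exactly the Dijkstra-style bottleneck/minimax relaxation you describe (Algorithm~\ref{buchi-algo}). Where you genuinely diverge is on $V_{=1}(\Box\de_x\,a)$: the paper dismisses this case with ``goes along similar lines and is omitted,'' whereas you supply an actual characterization --- $V_{=1}\neq\emptyset$ iff the subgraph induced by the reachable non-$a$-states is acyclic, with $\min V_{=1}$ the longest path in that DAG --- and your argument for it is sound: every finite path fragment of $M$ carries positive cylinder measure, so almost-sure satisfaction of $\Box\de_n\,a$ fails as soon as \emph{any} reachable non-$a$ cycle exists, transient or not, while acyclicity forces every non-$a$ run to terminate within the longest-path bound on every path, deterministically. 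Your warning that the BSCC-local analogue of Proposition~\ref{buchi} would be the wrong tool here is well taken: the $V_{>0}$ analysis deliberately ignores transient cycles (it only needs one good path into an accepting BSCC), and carrying that over ``along similar lines'' without accounting for transient non-$a$ cycles would give an incorrect criterion. In short, your version buys an explicit, verifiable proof of the half the paper leaves implicit, at no extra cost in complexity (acyclicity testing and longest path in a DAG are linear).
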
 
\begin{proof} 
See the appendix. \hfill $\blacksquare$
\end{proof}

\noindent
Determining $\min V_{\geq p}(\Box \de_{x} \, a)$ for arbitrary $p$ reduces to reachability of \emph{accepting} BSCCs.
In a similar way as for parametric reachability (cf. Section \ref{sec:reach}), this  can be done searching.  
For generalized B\"uchi formula $\varphi = \bo \de_{x_i} \, a_i \wedge\hdots\wedge \bo\de_{x_d} \, a_d$ and BSCC $B$, $n_{a_i B}$ is at most $m$. 
Thus, $\min V_{>0}(\varphi) \in \set{0,\hdots,m{\cdot}d}^d$ and can be found by the bisection method, 
similar to the procedure described in Section \ref{sec:pltlfx}. 

\section{The Fragment pLTL$_\de$}
\label{sec:nobox}

This section is concerned with the logical fragment pLTL$_\de$, as defined in (\ref{pltl_d}):
$$
\varphi \ ::= \ a \where \neg a \where \varphi \, \wedge \, \varphi \where \varphi \, \vee \, \varphi \where \X\! \varphi \where \varphi \U \varphi 
\where \varphi \Release \varphi 
\where \Box \varphi \where \de_{\leq x} \, \varphi. \footnote{The modalities  $\de_{\leq c}$ and $\Box_{\leq c}$ can be removed with only quadratic blow up.}
$$
We will focus on the emptiness problem: is $V_{>0}(\varphi) = \emptyset$.
The decision problem whether $V_{=1}(\varphi)$ is very similar.
Similar as for pLTL(F,X), we obtain necessary and sufficient criteria for both cases.
The proofs for these criteria depend on an algorithm that checks whether ${{\a v}} \in V_{>0}(\varphi)$.
This algorithm is presented first.

\paragraph{Automata constructions.}
\label{sec:algo}

Let $\varphi$ be a pLTL$_\de$-formula, and ${\a v}$ a variable valuation.
W.l.o.g. we assume that each variable occurs once in $\varphi$.
We will extend the classical automaton-based approach for LTL by constructing a nondeterministic B\"uchi automaton for $\varphi$ that is amenable to treat the variables occurring in $\varphi$.
To that end, inspired by \cite{DBLP:journals/tcs/Zimmermann13}, we proceed in a number of steps:
\begin{enumerate}
\item 
Construct an automaton $G_\varphi$ for $\varphi$, independent from the valuation ${\a v}$, with two types of acceptance sets, one for treating until and release-modalities (as standard for 
LTL~\cite{Vardi96anautomata-theoretic}), and one for treating the parameter constraints.
\item 
Establish how for a given valuation ${\a v}$, a B\"uchi automaton $B_\varphi({\a v})$ can be obtained from $G_\varphi$ such that for infinite word $w$, $(w, {\a v}) \in \L(\varphi)$ iff $w$ is an accepting run of $B_\varphi({\a v})$.
\item 
Exploit the technique advocated by Couvreur \emph{et al.} \cite{CouvreurSahebSutre03} to verify MC $M$ versus $B_\varphi(\ve)$.
\end{enumerate}
We start with constructing $G_\varphi$.
Like for the LTL-approach, the first step is to consider consistent sets of sub-formulas of $\varphi$.
Let $\cl(\varphi)$ be the set of all sub-formulas of $\varphi$. 
Set $H \subseteq \cl(\varphi)$ is \emph{consistent}, when: \\[1ex] 
\begin{minipage}{0.5\textwidth}
\begin{itemize}
    \item $a\in H$ iff $\neg a \not\in H$,
    \item $\varphi_1 \wedge \varphi_2\in H$ iff $\varphi_1 \in H \mbox{ and } \varphi_2\in H$,
    \item $\varphi_1 \vee \varphi_2 \in H$ iff $\varphi_1\in H \mbox{ or } \varphi_2\in H$,
\end{itemize}
\end{minipage}
\begin{minipage}{0.5\textwidth}
\begin{itemize}
    \item $\varphi_2 \in H$ implies $\varphi_1 \U\varphi_2 \in H$,
    \item $\varphi_1,\varphi_2 \in H$ implies $\varphi_1 \Release \varphi_2 \in H$,
    \item $\varphi_1 \in H$ implies $\de_{x} \, \varphi_1 \in H$.
\end{itemize}
\end{minipage} \\[1ex]
\noindent
We are now in a position to define $G_\varphi$, an automaton with two acceptance sets.
For $\varphi \in$ pLTL$_\de$, let $G_\varphi= (Q,2^{AP},Q_0,\delta,\Acc_B,\Acc_P)$ where
 \begin{itemize}
  \item $Q$ is the set of all consistent sub-sets of  $\cl(\varphi)$  and $Q_0 = \set{H \in Q \mid \varphi \in H}$.
  \item $(H,a,H')\in \delta$, where $a \in 2^{AP}$ whenever:
  \begin{itemize}
   \item $H \cap AP = \set{a}$,
   \item $\X\! \varphi_1\in H \iff \varphi_1 \in H'$,
   \item $\varphi_1\U\varphi_2 \in H \iff \varphi_2 \in H$ or $(\varphi_1\in H \mbox{ and } \varphi_1\U\varphi_2 \in H')$,
   \item $\varphi_1\Release\varphi_2 \in H \iff \varphi_2 \in H$ and $(\varphi_1\in H \mbox{ or } \varphi_1\Release\varphi_2 \in H')$,
   \item $\de_{x} \, \varphi_1 \in H \iff \varphi_1 \in H$ or $\de_{ x} \, \varphi_1 \in H'$,
  \end{itemize}
  \item (generalized) B\"uchi acceptance $\Acc_B$ and parametric acceptance $\Acc_P$:
  \begin{itemize}
   \item $\Acc_B = \set{F_{\varphi'} \mid \varphi' \in \cl(\varphi) \wedge (\varphi' = \varphi_1{\U}\varphi_2 
             \vee \varphi' = \varphi_1{\R}\varphi_2)}$ where
      \begin{itemize}
      \item 
      $F_{\varphi'} = \set{H \mid \varphi' \in H \Rightarrow \varphi_2 \in H}$ if $\varphi' = \varphi_1 \U \varphi_2$, and
      \item
      $F_{\varphi'} = \set{H \mid \varphi_2 \in H \Rightarrow \varphi' \in H}$ if $\varphi' = \varphi_1 \R \varphi_2$,
      \end{itemize}
   \item $\Acc_P = \set{F_{{x_i}} \, | \, \de_{x_i} \, \varphi_i \in \cl(\varphi)}$ with $F_{{x_i}}  = \set{H \, | \, \de_{x_i} \, \varphi_i \in H \Rightarrow \varphi_i \in H}$. 
  \end{itemize}
 \end{itemize}
  A run $\rho \in Q^\omega$ of $G_\varphi$ 
  is \emph{accepting} under valuation $\ve$ if it visits each set in $\Acc_B$ infinitely often and each $F_{x_i} \in \Acc_P$ 
  in every infix of length $\ve(x_i)$.
  $\L(G_\varphi )$ contains all pairs $(w,\ve)$ such that there is an accepting run of $w$ under the valuation $\ve$. 
$G_\varphi$ is \emph{unambiguous} if $q \xrightarrow{a} q'$ and $q\xrightarrow{a}q''$ implies $\L(q') \cap \ \L(q'')=\emptyset$, where $\L(q)$ is the language starting from the state $q$.
\begin{proposition}[\cite{DBLP:journals/tcs/Zimmermann13}]
For $\varphi \in$ pLTL$_\de$, the automaton $G_\varphi $ is unambiguous and $\L(G_\varphi ) = \L(\varphi)$.
\end{proposition}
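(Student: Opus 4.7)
The claim has two parts: (i) $\L(G_\varphi) = \L(\varphi)$ and (ii) $G_\varphi$ is unambiguous. The plan is to establish both via a single semantic invariant characterising the states visited on any accepting run.

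The \emph{key invariant} to prove is: for every accepting run $\rho = H_0 H_1 H_2 \ldots$ of $G_\varphi$ on $(w,\ve)$ and for every $i \geq 0$ and $\psi \in \cl(\varphi)$, we have $\psi \in H_i$ iff $(w,i,\ve) \models \psi$. I would prove this by structural induction on $\psi$. The atomic cases follow from the transition rule $H \cap AP = \{a\}$ combined with the consistency condition $a \in H$ iff $\neg a \notin H$; the Boolean cases are immediate from consistency; the $\X$ case is direct from the transition rule relating $H_i$ and $H_{i+1}$. The $\U$ and $\R$ cases are the standard Vardi--Wolper arguments using the generalized B\"uchi sets $F_{\varphi_1 \U \varphi_2}$ and $F_{\varphi_1 \R \varphi_2}$ to rule out pending eventualities and to force release-formulas to remain valid along the suffix.

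The critical and novel case is $\psi = \de_{\leq x}\psi'$. The forward direction is by unfolding: if $\de_{\leq x}\psi' \in H_i$, the transition rule forces either $\psi' \in H_i$ (witness $j=i$) or $\de_{\leq x}\psi' \in H_{i+1}$, and so on. The parametric acceptance condition demands that $F_x = \{H : \de_{\leq x}\psi' \in H \Rightarrow \psi' \in H\}$ is visited within every window of length $\ve(x)$, which guarantees some $j \in \{i, \ldots, i+\ve(x)\}$ with $\psi' \in H_j$; by the induction hypothesis $(w,j,\ve) \models \psi'$, so $(w,i,\ve) \models \de_{\leq x}\psi'$. For the converse, if $(w,i,\ve) \models \de_{\leq x}\psi'$, pick the least $j \leq i + \ve(x)$ with $(w,j,\ve) \models \psi'$, so by IH $\psi' \in H_j$, and iterated application of the transition rule then propagates $\de_{\leq x}\psi'$ back into $H_i$. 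The single-occurrence assumption on variables keeps this bookkeeping local to the one sub-formula using $x$, which is where I expect the main subtlety to lie.

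With the invariant in hand, both claims follow quickly. For $\L(G_\varphi) \subseteq \L(\varphi)$: if $(w,\ve)$ is accepted from $H_0 \in Q_0$, then $\varphi \in H_0$, so $(w,0,\ve) \models \varphi$. For $\L(\varphi) \subseteq \L(G_\varphi)$: define the canonical run by $H_i := \{\psi \in \cl(\varphi) : (w,i,\ve) \models \psi\}$; verify it is consistent, that $H_0 \in Q_0$, that the transition conditions and both acceptance components are met (using the semantics of $\U$, $\R$ and $\de_{\leq x}$ to produce the required infinitely-often and window visits). For unambiguity, suppose two accepting runs of $G_\varphi$ on the same $(w,\ve)$ exist; the invariant forces both to satisfy $H_i = \{\psi \in \cl(\varphi) : (w,i,\ve) \models \psi\}$ at every position, hence they coincide, so no two distinct $a$-successors of any $q$ can share a word in their languages.
\qquad\blacksquare
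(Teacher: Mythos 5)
First, note that the paper does not prove this proposition at all --- it is imported verbatim from Zimmermann's work via the citation --- so your argument stands or falls on its own. It falls at exactly the place you flagged as ``the main subtlety'': the case $\psi = \de_{\leq x}\psi'$. Your key invariant ($\psi \in H_i$ iff $(w,i,\ve)\models\psi$ on accepting runs) and your completeness construction (the canonical run $H_i := \{\psi \in \cl(\varphi) : (w,i,\ve)\models\psi\}$) both presuppose that the transition rule $\de_{x}\varphi_1 \in H \iff \varphi_1 \in H \mbox{ or } \de_{x}\varphi_1 \in H'$ is a semantic expansion law for the bounded eventuality. It is not: the right-to-left direction fails because a witness within $\ve(x)$ steps of position $i{+}1$ need not be within $\ve(x)$ steps of position $i$. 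Concretely, take $\varphi = \X\de_{\leq x}a$, $\ve(x)=1$ and $w = \emptyset\,\emptyset\,\{a\}\cdots$. Then $\de_{\leq x}a$ is true at position $1$ but false at position $0$, so the canonical sets have $\de_x a \in H_1$ and $\de_x a \notin H_0$, which violates the biconditional in the transition relation; the canonical sequence is simply not a run of $G_\varphi$, and conversely any genuine run must carry $\de_x a$ backwards into $H_0$, where it is semantically false. So both directions of your invariant break, and with it your proofs of $\L(\varphi)\subseteq\L(G_\varphi)$ and of unambiguity. The single-occurrence assumption on variables does not repair this; it is orthogonal to the window-shifting problem.

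The repair (and the actual content of Zimmermann's argument) is to use a different invariant: on a run respecting the transition relation, membership of $\de_{\leq x}\psi'$ in $H_i$ tracks the truth of the \emph{unbounded} eventuality $\de\psi'$ at $i$ --- for which the expansion law genuinely is an equivalence --- and it is the parametric acceptance condition (every window of length $\ve(x)$ meets $F_x$) that separately forces each pending obligation to be discharged within $\ve(x)$ steps, thereby converting $\de\psi'$ into $\de_{\leq x}\psi'$ where it matters. Completeness then requires building the run with the unbounded reading and arguing that the window discipline is met whenever $(w,\ve)\models\varphi$; soundness requires showing that a discharged-in-window obligation yields a bounded witness; and unambiguity follows because the unbounded reading determines every state from the word alone. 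Your Boolean, $\X$, $\U$ and $\R$ cases are the standard Vardi--Wolper argument and are fine, but the parametric case needs to be restructured around this weaker invariant rather than the exact truth-set characterisation.
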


\noindent
The automaton $G_\varphi $ can be constructed in ${\cal O}(2^{|\varphi|})$. 
Apart from the parametric acceptance condition, $G_\varphi $ behaves as a generalized B\"uchi automaton (GNBA) with accepting set $\Acc_B = \set{F_1,\ldots,F_k}$.
In order to obtain a non-deterministic automaton, we first apply a similar transformation as for GNBA to NBA~\cite{DBLP:books/daglib/0020348}. 
We convert $G_\varphi $ to $U_\varphi = (Q',2^{\smAP},Q_0',\delta',\Acc_B',\Acc_P')$ where $Q' = Q \times \set{1, \ldots, k}$, $Q_0' = Q_0\times \set{1}$.
If $(q,a,q') \in \delta$, then $((q,i),a,(q',i')) \in \delta'$ with $i{=}i'$ if $q\not\in F_i$ else $i'= (i \mod k){+}1$. 
$\Acc_B= F_1\times\set{1}$ and $\Acc_P'= \set{F_{x_i}' \mid F_{x_i} \in \Acc_P}$, where $F_{x_i}'= F_{x_i} \times \set{1,\ldots,k}$. 
Note that the construction preserves unambiguity and the size of $U_\varphi$ is in ${\cal O}(|\varphi|{\cdot}2^{|\varphi|})$.
%

For a given valuation ${\a v}$, $U_\varphi$ can be converted into an NBA $B_\varphi({\a v})$.
This is done as follows.
Let $U_\varphi =(Q',2^{AP},Q_0',\delta',\Acc'_B,\Acc'_P)$ and ${\a v}$ a valuation of $\varphi$ with $d$ parameters.
Then $B_\varphi({\a v}) = (Q'',2^{AP},Q''_0,\delta'',\Acc)$ with:
 \begin{itemize}
  \item $Q'' \subseteq Q'\times \{0,\hdots,{\a v}(x_1)\} \times \hdots \times \{0,\hdots,{\a v}(x_d)\}$,
  \item $((q,\mathbf n),a,(q',\mathbf n')) \in \delta''$ if $(q,a,q')\in \delta'$ and for all $x_i$:
  \begin{itemize}
   \item if $q'\in F'_{x_i}$ and $\mathbf n(x_i)< {\a v}(x_i)$ then $\mathbf n'(x_i)=0$,
   \item if $q'\notin F'_{x_i}$ and $\mathbf n(x_i) <{\a v}(x_i)$ then $\mathbf n'(x_i)= \mathbf n(x_i)+1$.
  \end{itemize}
  \item $Q_0'' = Q_0'\times {0}^d$ and $\Acc =\Acc'_B \times\{0,\hdots,{\a v}(x_1)\} \times \hdots \times \{0,\hdots,{\a v}(x_d)\}$.
 \end{itemize}
It follows that $B_\varphi(\ve)$ is unambiguous for any valuation $\ve$. 
Furthermore, every run of $B_\varphi({\a v})$ is either finite or satisfies the parametric acceptance condition for valuation ${\a v}$. 
Thus we have:
\begin{proposition}
An infinite word $w \in \L(B_\varphi({\a v}))$ if and only if $(w,{\a v})\in \L(\varphi)$.
\end{proposition}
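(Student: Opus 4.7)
The strategy is to factor the equivalence through $U_\varphi$. The preceding proposition gives $\L(G_\varphi) = \L(\varphi)$, and the GNBA-to-NBA style transformation from $G_\varphi$ to $U_\varphi$ preserves both the B\"uchi language and the parametric acceptance sets (stratifying states by a B\"uchi index $i \in \{1,\ldots,k\}$ does not alter the projections onto $F_{x_i}$, since $F'_{x_i} = F_{x_i}\times\{1,\ldots,k\}$). Hence $(w,\ve)\in\L(U_\varphi)$ iff $(w,\ve)\in\L(\varphi)$, and it suffices to show $w\in\L(B_\varphi(\ve))$ iff $(w,\ve)\in\L(U_\varphi)$.

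The key technical device will be a counter invariant, proved by induction on $j$: along any run $(q_0,\mathbf{n}_0)(q_1,\mathbf{n}_1)\cdots$ of $B_\varphi(\ve)$ and for every parameter $x_i$, the component $\mathbf{n}_j(x_i)$ equals $j-\ell_j$, where $\ell_j$ is the largest index $\ell\leq j$ with $q_\ell\in F'_{x_i}$ (and $\ell_j=0$ if no such $\ell$ exists). This is immediate from the reset/increment rule defining $\delta''$. For the forward direction, I would project an accepting run $\rho''$ of $B_\varphi(\ve)$ onto $Q'$ to obtain a run $\rho'$ of $U_\varphi$ on $w$; since $\Acc = \Acc'_B\times\prod_i\{0,\ldots,\ve(x_i)\}$, the run $\rho''$ visits $\Acc$ infinitely often iff $\rho'$ visits $\Acc'_B$ infinitely often, and the gating condition $\mathbf{n}_j(x_i)<\ve(x_i)$ that every transition of $\delta''$ demands, combined with the invariant, forces $\rho'$ to revisit $F'_{x_i}$ within every window of length $\ve(x_i)$, which is exactly the parametric condition for $U_\varphi$ under $\ve$. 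Conversely, I would lift an accepting run $\rho'$ of $U_\varphi$ under $\ve$ by inductively annotating counters via the same rule; because $\rho'$ satisfies the parametric condition, no counter ever reaches $\ve(x_i)$, so every lifted transition is defined in $\delta''$, and the B\"uchi condition transfers verbatim.

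The main obstacle will be aligning the two formulations of the parametric requirement: ``$F_{x_i}$ occurs in every infix of length $\ve(x_i)$'' on the $U_\varphi$ side versus the ``counter strictly below $\ve(x_i)$'' gating in $\delta''$. Off-by-one errors are tempting here, but the counter invariant above exactly pins $\mathbf{n}_j(x_i)$ to the number of steps since the last $F'_{x_i}$ visit, from which both directions of the window characterization follow cleanly; the rest of the argument is then essentially the same bookkeeping one uses in the classical automata-theoretic treatment of LTL.
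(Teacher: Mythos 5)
Your proposal is correct and takes essentially the same route as the paper, which in fact gives no explicit proof at all: the proposition is presented as an immediate consequence of $\L(G_\varphi)=\L(\varphi)$, the language-preserving conversion to $U_\varphi$, and the remark that every run of $B_\varphi(\ve)$ is either finite or satisfies the parametric acceptance condition. Your counter invariant ($\mathbf{n}_j(x_i)$ equals the number of steps since the last visit to $F'_{x_i}$) is exactly the bookkeeping the paper leaves implicit, so your write-up is simply a more detailed version of the intended argument.
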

The size of $B_\varphi(\ve)$ is in ${\cal O}(c_{{\a v}}{\cdot}|\varphi|{\cdot}2^{|\varphi|})$ where $c_{{\a v}} = \prod_{x_i}\mathbf(\ve(x_i)+1)$.

As a next step, we exploit the fact that $B_\varphi(\ve)$ is unambiguous, and apply the technique by Couvreur \emph{et al.}~\cite{CouvreurSahebSutre03} for verifying MC $M$ against $B_\varphi(\ve)$.
Let $M\otimes B_\varphi({\a v})$ be the synchronous product of $M$ and $B_\varphi({\a v})$~\cite{DBLP:books/daglib/0020348}, $\Pi_1$ the projection to $M$ and $\Pi_2$ the projection to $B_\varphi(\ve)$. 
Let $\L(s,q) =  \set{ \pi \in\Paths(s) \mid \trace(\pi) \in \L(q)}$ and $\Pr(s,q) = \Pr(\L(s,q))$. 
Let $\Pr(M \otimes B_\varphi(\ve)) = \sum_{q_0 \in Q_0} \Pr(s_0,q_0)$.
As $B_\varphi({\a v})$ is unambiguous, we have for any $(s,q)$:
$$
\Pr(s,q) \ = \ \sum_{(t,q')\in \delta(s,q)} \bfP(s,t) \cdot \Pr(t,q'),
$$ 
where $\delta$ is the transition relation of $M\otimes B_\varphi({\a v})$ and $\bfP(s,t)$ is the one-step transition probability from $s$ to $t$ in MC $M$. 
A (maximal) strongly connected component (SCC, for short) $C \subseteq S$ is \emph{complete} if for any $s \in \Pi_1(C)$ :
$$
\Paths(s) = \bigcup_{(s,q) \in C} \L_C(s,q)
$$ 
where $\L_C(s,q)$ \emph{restricts} runs to $C$ (runs only visits states from $C$). 
The SCC $C$ is \emph{accepting} if $\Acc \,\cap \Pi_2(C)\neq \emptyset$ (where $\Acc$ is the set of accepting states in $B_\varphi(\ve)$). 

\begin{proposition}[\cite{CouvreurSahebSutre03}\label{accepting-bscc}]
Let $C$ be a complete and accepting SCC in $M\otimes B_\varphi(\ve)$. 
Then for all $s\in \Pi_1(C)$:
$$ 
\Pr\bigg(\bigcup_{(s,q)\in C} \L_C(s,q)\bigg) = 1.
$$ 
Moreover, since $B_\varphi({\a v})$ is unambiguous, $\Pr(M\otimes B_\varphi({\a v})) >0$ implies there exists a reachable, complete and accepting SCC. 
\end{proposition}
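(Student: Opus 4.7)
The plan is to establish the first statement by invoking completeness essentially verbatim, then supplementing it with standard Markov chain arguments on the strongly connected component $C$ to ensure acceptance of the lifted runs. Specifically, since $C$ is complete, $\Paths(s) = \bigcup_{(s,q) \in C} \L_C(s,q)$ holds setwise for every $s \in \Pi_1(C)$, so taking probabilities immediately yields $\Pr\!\left(\bigcup_{(s,q) \in C} \L_C(s,q)\right) = \Pr(\Paths(s)) = 1$. It then remains to verify that these lifted runs indeed satisfy the B\"uchi acceptance condition of $B_\varphi(\ve)$: once a lifted trajectory is confined to $C$, the induced stochastic process on the finite state space $C$ almost surely visits every state of $C$ infinitely often; since $C$ is accepting, i.e.\ $\Acc \cap \Pi_2(C) \neq \emptyset$, the run visits $\Acc$ infinitely often almost surely, which is exactly what the acceptance condition demands.

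For the second statement, I would argue contrapositively. Consider the SCC decomposition of the reachable part of $M \otimes B_\varphi(\ve)$, viewed as a stochastic process via the probability lifting induced by unambiguity. By standard finite Markov chain theory, almost every trajectory eventually stays in a single SCC. Suppose that no reachable SCC is both complete and accepting. Then every reachable SCC $C'$ fails either acceptance, in which case lifted trajectories confined to $C'$ do not visit $\Acc$ infinitely often and so cannot contribute positive mass to the accepted language, or completeness, which means a positive-measure set of paths from some $s \in \Pi_1(C')$ has no lifted trajectory remaining in $C'$ and this mass must leak to other SCCs. Summing over all reachable SCCs, $\Pr(M \otimes B_\varphi(\ve))$ must then be zero, contradicting the hypothesis and yielding the desired reachable, complete, and accepting SCC.

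The main potential obstacle is justifying the probability-lifting identity $\Pr(s,q) = \sum_{(t,q') \in \delta(s,q)} \bfP(s,t) \cdot \Pr(t,q')$ used implicitly throughout. For general nondeterministic automata this identity fails, because a single MC-transition may be compatible with several successor states in the automaton, causing probability mass to be duplicated. Unambiguity is precisely what rules this out: each accepted word has at most one accepting run, so probabilities factor cleanly through the product. Once this bookkeeping is in place, both halves of the proposition reduce to routine SCC-decomposition arguments on the product Markov chain.
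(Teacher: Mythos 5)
The paper offers no proof of this proposition: it is imported from Couvreur \emph{et al.}~\cite{CouvreurSahebSutre03}, so there is no in-paper argument to compare against. Your reconstruction follows the standard argument from that source and is essentially right. For the first claim, you correctly observe that the displayed identity is immediate, since completeness is \emph{defined} by $\Paths(s)=\bigcup_{(s,q)\in C}\L_C(s,q)$; your supplementary remark that a trajectory confined to $C$ almost surely visits $\Acc\cap\Pi_2(C)$ infinitely often is not needed for that identity itself, but it is the reason a complete \emph{and accepting} SCC contributes its full mass to $\Pr(M\otimes B_\varphi(\ve))$. You also put your finger on the genuine crux: without unambiguity the lifting $\Pr(s,q)=\sum_{(t,q')\in\delta(s,q)}\bfP(s,t)\cdot\Pr(t,q')$ fails because mass is double-counted, whereas unambiguity makes the languages of distinct $a$-successors disjoint and turns the recurrence into an equality.

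The one step you assert rather than prove is ``this mass must leak to other SCCs'' for a reachable SCC $C'$ that is accepting but not complete; this is the only nontrivial content of the second claim. What is needed is: incompleteness yields a state $s\in\Pi_1(C')$ and a \emph{finite} path fragment of $M$ from $s$ that admits no lifting to a run staying inside $C'$ from any $(s,q)\in C'$; by the recurrence property of finite Markov chains (the same fact the paper invokes in Proposition~\ref{buchi}), almost every path of $M$ that keeps returning to $s$ eventually realizes this fragment, so the set of paths whose (unique, by unambiguity) accepting run has its infinity set inside $C'$ has measure zero. Since the infinity set of any accepting run is strongly connected and meets $\Acc$, summing over the finitely many maximal SCCs gives $\Pr(M\otimes B_\varphi(\ve))=0$, which completes your contrapositive. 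With that step filled in, your proof is the standard one.
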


Finding complete and accepting SCC in $M\otimes B_\varphi(\ve)$ is done by standard graph analysis. 
Altogether, ${{\a v}} \in V_{>0}(\varphi)$ is decided in ${\cal O}(m{\cdot}c_{{\a v}}{\cdot}|\varphi|{\cdot}2^{|\varphi|})$. 
The space complexity is polynomial in the size of the input (including the valuation), as $M\otimes B_\varphi({\a v})$ can be stored in $\mathcal{O}(\log m + |\varphi| + \log c_{{\a v}})$ bits.
In the sequel, we exploit these results to obtain a necessary and sufficient criterion for the emptiness of $V_{> 0}(\varphi)$ for $\varphi$ in  pLTL$_\de$.


\begin{theorem}\label{th:pltld} 
$\!\!\!\!$ 
For $\varphi \in$ pLTL$_\de$, 
$V_{> 0}(\varphi) \neq \emptyset \mbox{ iff } \bar{{\a v}} \in V_{> 0}(\varphi)$ s.t. $\bar{{\a v}}(x) = m{\cdot}|\varphi|{\cdot}2^{|\varphi|}$.
\end{theorem}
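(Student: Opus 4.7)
The $\Leftarrow$ direction is immediate. For the $\Rightarrow$ direction, my strategy is to pick any ${\a v}^* \in V_{>0}(\varphi)$; by Proposition~\ref{accepting-bscc} applied to the unambiguous $B_\varphi({\a v}^*)$, there exists a reachable, complete, accepting SCC $C \subseteq M \otimes B_\varphi({\a v}^*)$. I would then argue that this same $C$ also sits inside $M \otimes B_\varphi({\a v}^{**})$ for a valuation ${\a v}^{**}$ whose every component is bounded by $N := m \cdot |\varphi| \cdot 2^{|\varphi|}$, and conclude by monotonicity (Remark~\ref{monotonicity}), since then $\bar{{\a v}} \geq {\a v}^{**}$ componentwise.

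The technical heart is to bound the maximum counter value $\mathbf{n}(x_i)$ attained in $C$ by $N-1$, for every parameter $x_i$. I would first project $C$ to its counter-free shadow $C' := \{(s,q) : (s,q,\mathbf{n}) \in C \text{ for some } \mathbf{n}\} \subseteq M \otimes U_\varphi$ and note $|C'| \leq |M \otimes U_\varphi| \leq m \cdot |\varphi| \cdot 2^{|\varphi|} = N$. The crucial structural claim is that every cycle of $C'$ visits $F_{x_i}'$. Otherwise, taking such a cycle $\rho$ and looping its $M$-projection forever yields a path in $\Paths(s)$ which, by completeness, must admit a lift inside $C$; unambiguity of $B_\varphi({\a v}^*)$ then forces this lift to follow $\rho$'s $q$-sequence, so its $x_i$-counter grows by $|\rho|$ on each round and eventually exceeds ${\a v}^*(x_i)$, leaving $C$ - a contradiction to the set-theoretic reading $\Paths(s) = \bigcup_{(s,q) \in C} \L_C(s,q)$ of completeness (which covers even measure-zero looping paths). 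Consequently $C' \setminus F_{x_i}'$ is acyclic, so every $F_{x_i}'$-avoiding path inside $C'$ has length at most $|C'|-1 \leq N-1$, and this is exactly the maximum value $\mathbf{n}(x_i)$ can take at any state of $C$.

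With the counter bound in hand, setting ${\a v}^{**}(x_i) := N$ for each $i$ preserves every triple of $C$ and every transition between them in $M \otimes B_\varphi({\a v}^{**})$: all counters stay strictly below $N$, so every incrementing transition remains permitted and resets are unchanged. Hence $C$ is a reachable, complete, accepting SCC of $M \otimes B_\varphi({\a v}^{**})$, giving ${\a v}^{**} \in V_{>0}(\varphi)$, and monotonicity delivers $\bar{{\a v}} \in V_{>0}(\varphi)$. The main obstacle is the structural claim: the counter-overflow contradiction needs additional care when $\rho$ does not itself pass through $\Acc_B'$ (so $\rho$'s $q$-sequence is not yet an accepting $U_\varphi$-run), which I plan to handle by splicing $\rho$ with a short cycle of $C'$ through some state of $\Acc_B' \cap C'$ before invoking unambiguity to force the overflow.
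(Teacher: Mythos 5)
Your overall plan (take the complete accepting SCC $C$ guaranteed by Proposition~\ref{accepting-bscc} in $M\otimes B_\varphi({\a v}^*)$ and show its counter components never exceed $N$, so that $C$ survives in $M\otimes B_\varphi(\bar{{\a v}})$) is genuinely different from the paper's proof, which instead builds the product at the \emph{small} valuation $r=N$ from the start and transfers completeness and acceptance between corresponding SCCs of the two products by pumping repeated $(s,q)$-pairs along finite paths. Your route would be attractive if it worked, but the step on which everything rests --- the ``crucial structural claim'' --- is not established by your argument. Given a cycle $\rho$ of $C'$ avoiding $F'_{x_i}$, completeness only tells you that the looped path $\pi=(\rho^{(M)})^\omega$ admits \emph{some} accepting run inside $C$; unambiguity then tells you that among the successors of a state reading a letter, at most one has the remaining suffix in its language. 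This does \emph{not} force the accepting run to follow $\rho$'s $q$-sequence: for that you would need to know that every state along $\rho$ accepts the corresponding suffix of $\trace(\pi)$, which is exactly what fails (a run that tracks $\rho$ keeps the $\de_{x_i}$-obligation pending forever and is therefore \emph{not} accepting, while the genuine accepting run of $\pi$ discharges the obligation through different automaton states). So no counter overflow is forced and no contradiction is reached; the difficulty you flag at the end (splicing through $\Acc_B'$) is not the real obstruction. Note also that what you actually need is weaker than the claim as stated: you only need to exclude $F'_{x_i}$-free cycles arising as projections of counter-increasing paths \emph{inside} $C$, and even that requires an argument that uses completeness in an essential and so far missing way.

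There is a second, smaller gap: even if all counters occurring in $C$ were bounded by $N-1$, you still must show $C$ is \emph{reachable} in $M\otimes B_\varphi(\bar{{\a v}})$. The witnessing path from the initial state to $C$ in $M\otimes B_\varphi({\a v}^*)$ may pass through counter values exceeding $N$; one has to shorten it by removing repeated $(s,q)$-pairs (as in part (b) of the paper's argument), and the shortened path then lands on some $(s,q,\mathbf{n})$ whose counter vector need not coincide with that of any state of $C$, so an additional argument is needed that reaching such a counter-variant still yields positive acceptance probability. I would recommend either supplying a correct proof of the counter bound that genuinely exploits the set-theoretic completeness of $C$ (your parenthetical remark that completeness covers measure-zero looping paths is the right instinct, but it must be turned into an argument that no state of $C$ can carry a large counter, not into a uniqueness claim about runs), or switching to the paper's strategy of reasoning directly about SCCs of $M\otimes B_\varphi(\bar{{\a v}})$ and transferring non-completeness upward and acceptance downward by pumping.
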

\begin{proof} 
Consider the direction from left to right.
The only non-trivial case is when there exists a valuation ${\a v} \not\leq \bar{{\a v}}$ such that ${\a v}\in V_{>0}(\varphi)$ implies $\bar{{\a v}}\in V_{>0}(\varphi)$. 
In the model checking algorithm described above, we first construct $G_\varphi$, and then $U_\varphi$ with a single B\"uchi accepting set $\Acc'_B$ and $d$ parametric accepting sets $F'_{x_i}$, one for each variable $x_i$ in $\varphi$. 
For the sake of clarity, assume $d=1$, i.e., we consider valuation $v$.
The explanation extends to the general case in a straightforward manner.
For valuation $v$, consider $M \otimes B_\varphi(v)$. 
We show that, for $r < v$, $\Pr(M \otimes B_\varphi(v)) >0$ implies $\Pr(M\otimes B_\varphi(r)) >0$,
where $r = m{\cdot}|U_\varphi|$, which is in ${\cal O}(m{\cdot}|\varphi|{\cdot}2^{|\varphi|})$. 

Note that every cycle in $M \otimes B_\varphi(r)$ contains a state $(s,q,i)$ with $i=0$.
Moreover, the graph of $M \otimes B_\varphi(r)$ is a sub-graph of $M\otimes B_\varphi(v)$. 
We now prove that, if a (maximal) SCC $C$ of $M\otimes B_\varphi(r)$
is not complete (or accepting) then any SCC $C'$ of $M\otimes B_\varphi(v)$ containing $C$ is also not complete (or accepting, respectively).   

(a) Suppose $C$ is not complete. 
Then there exists a finite path $\sigma=s \, s_1 \ldots s_k$ of $M$, such that from any $q$, with $(s,q,0) \in C$, the run $\rho =(s,q,0)(s_1,q_1,1) \ldots (s_j,$ $q_j,j)$ leads to a deadlock state. 
This can have two causes: either $(s_j,q_j,j)$ has no successor for any $j$. 
Then, $C'$ is not complete.
Or, the path $\rho$ terminates in $(s_j,q_j,j)$ where $j=r$.
This means, for all $(s',q',j{+}1) \in \delta(s_j,q_j,j)$ in $C'$, $q' \not\in F_x$. 
As the length of $\rho$ exceeds $r$, there are states in the run whose first and second component appear multiple times.
Thus, we can find another path $\sigma'$ (possibly longer than $\sigma$) for $C'$ which goes through states where the first and the second component of some of its states are repeated sufficiently many times to have a run $(s,q,0)(s_1,q_1,1)\ldots(s_j,q_j,v)$ which is a deadlock state. 
Thus, $C'$ is not complete.
 
(b) Suppose $C'$ is accepting. 
Then there exists $(s',q',i')$ with $q'\in \Acc$. 
Since $C'$ is an SCC and $C\subseteq C'$, there is a path from $(s,q,0)$  $\in C$ to $(s',q',i')$. 
If the length of the path is less than $r$, then we are done. 
If $i'>r$, then some $(s'',q'')$ pair in the path must be repeated. 
Thus, we can find another path of length less than $r$ to a state $(s',q',i)$, where $i \leq r$. 
Therefore, $C$ is accepting. 
The rest of the proof follows from Proposition~\ref{accepting-bscc}.
\hfill $\blacksquare$
\end{proof}
For almost sure properties, a similar approach as for $V_{>0}(\varphi)$ suffices.

\begin{theorem}
$\!\!\!\!$
For $\varphi\in$ pLTL$_\de$, $V_{= 1}(\varphi) \neq \emptyset \mbox{ iff } \bar{ {\a v}} \in V_{= 1}(\bar\varphi)$ with $\bar{ {\a v}}(x) = m{\cdot}|\varphi|{\cdot}2^{|\varphi|}$.
\end{theorem}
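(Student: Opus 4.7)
The direction $(\Leftarrow)$ is immediate since $\bar{\ve}$ itself witnesses $V_{=1}(\varphi)\neq\emptyset$. For $(\Rightarrow)$ I argue by contraposition: assume $\Pr(M\models\bar{\ve}(\varphi))<1$ and deduce $V_{=1}(\varphi)=\emptyset$. For an arbitrary valuation $\ve$, set $\ve^+:=\max(\ve,\bar{\ve})$ componentwise; Remark~\ref{monotonicity} yields $\Pr(M\models\ve(\varphi))\leq\Pr(M\models\ve^+(\varphi))$, so it suffices to prove $\Pr(M\models\ve^+(\varphi))<1$ for every $\ve^+\geq\bar{\ve}$.

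By the Couvreur-style characterisation (Proposition~\ref{accepting-bscc}) applied to the unambiguous automaton $B_\varphi(\cdot)$, the condition $\Pr(M\models\ve(\varphi))<1$ is equivalent to the existence of a reachable SCC in $M\otimes B_\varphi(\ve)$ that is either not complete or is complete-but-not-accepting; call such an SCC \emph{bad}. The hypothesis yields a reachable bad SCC $C$ in $M\otimes B_\varphi(\bar{\ve})$, and since $M\otimes B_\varphi(\bar{\ve})$ embeds as a subgraph of $M\otimes B_\varphi(\ve^+)$ via the canonical inclusion on counter values, $C$ lies inside some reachable SCC $C'$ of the larger product. It remains to transfer badness from $C$ to $C'$, and this is exactly what the SCC lemma inside the proof of Theorem~\ref{th:pltld} delivers: part~(a) of that argument shows that $C$ not complete implies $C'$ not complete, while the contrapositive of part~(b) shows that $C$ not accepting implies $C'$ not accepting. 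Hence $C'$ is a reachable bad SCC of $M\otimes B_\varphi(\ve^+)$, yielding $\Pr(M\models\ve^+(\varphi))<1$ as required.

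The main obstacle---and essentially the only new content beyond Theorem~\ref{th:pltld}---is recognising that the directional SCC correspondence originally used to go from a large valuation $v$ down to the cut-off $r=\bar{\ve}$ for the positive-probability setting already points in the right direction for the almost-sure setting: in its contrapositive form it propagates badness from the small product to the large one. The underlying pigeonhole step on pairs $(s,q)\in S\times Q'$, which requires precisely $\bar{\ve}(x)=m\cdot|\varphi|\cdot 2^{|\varphi|}\geq m\cdot|U_\varphi|$ so that any cycle exceeding the counter bound repeats such a pair and can be shortened or extended, carries over verbatim; the only care is at the boundary of the counter range, where a run in $B_\varphi(\bar{\ve})$ may truncate while its counterpart in $B_\varphi(\ve^+)$ continues, but since truncation only removes the possibility of acceptance, non-acceptance on the small side is preserved on the large side.
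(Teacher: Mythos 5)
Your argument follows essentially the same route as the paper's proof: contraposition, reduction via monotonicity to valuations above the cut-off, and transfer of the structural obstruction in $M\otimes B_\varphi(\bar{\ve})$ to every larger product $M\otimes B_\varphi(\ve^+)$ via the pigeonhole/SCC-correspondence lemma from the proof of Theorem~\ref{th:pltld} used in its contrapositive direction; your write-up is in fact more explicit than the paper's own two-sentence sketch. The one caveat is that your claimed equivalence ``$\Pr(M\models\ve(\varphi))<1$ iff a reachable bad SCC exists'' is stronger than what Proposition~\ref{accepting-bscc} actually provides (a reachable non-complete SCC can coexist with probability one when the unambiguous automaton branches disjunctively, since a word may have its unique accepting run elsewhere), but the paper's proof rests on the very same unstated characterisation, so this is an inherited imprecision rather than a gap relative to the paper.
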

\noindent
Let $N_{\varphi M} = m{\cdot}|\varphi|{\cdot}2^{|\varphi|}$.
Note that $c_{\bar \ve}$ equals $(N_{\varphi M})^d$. 
Thus, we have:

\begin{proposition}
 For $\varphi \in $ pLTL$_\de$, deciding if $V_{>0}(\varphi) = \emptyset$ is PSPACE-complete.
\end{proposition}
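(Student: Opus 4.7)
The plan is to establish matching PSPACE upper and lower bounds.

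For the upper bound, I would apply Theorem~\ref{th:pltld} to reduce emptiness of $V_{>0}(\varphi)$ to the single membership question $\bar{{\a v}} \in V_{>0}(\varphi)$, where $\bar{{\a v}}(x) = N_{\varphi M}$. Each component of $\bar{{\a v}}$ has magnitude at most $N_{\varphi M}$, hence fits in $O(|\varphi| + \log m)$ bits, so the valuation itself has polynomial-size representation. As noted before Theorem~\ref{th:pltld}, a state of the product $M \otimes B_\varphi(\bar{{\a v}})$ can be encoded in $O(\log m + |\varphi| + d\log N_{\varphi M})$ bits, which is polynomial in the input even though the product graph is exponentially large. The membership check based on Couvreur \emph{et al.}~\cite{CouvreurSahebSutre03} amounts to deciding reachability of a complete, accepting SCC (Proposition~\ref{accepting-bscc}); these graph-theoretic questions can be answered by guessing successor states on-the-fly and recomputing the transition relation from the succinct descriptions of $M$ and $B_\varphi(\bar{{\a v}})$. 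The search runs in nondeterministic space logarithmic in the product-graph size, i.e. polynomial in the input, and Savitch's theorem places the problem in PSPACE.

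For PSPACE-hardness, I would reduce from the qualitative LTL model-checking problem on finite Markov chains, namely deciding whether $\Pr(M \models \psi) > 0$ for a given MC $M$ and LTL-formula $\psi$, which is PSPACE-complete~\cite{DBLP:conf/focs/Vardi85,DBLP:journals/jacm/CourcoubetisY95}. After converting $\psi$ into negative normal form it becomes a pLTL$_\de$-formula with no parametric subscripts, and for such a variable-free formula $V_{>0}(\psi)$ is either empty or equals the trivial valuation space, according to whether $\Pr(M \models \psi) = 0$ or $>0$. Thus emptiness of $V_{>0}(\psi)$ coincides with $\Pr(M \models \psi) = 0$, transferring PSPACE-hardness.

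The main obstacle, as I see it, lies in the upper-bound argument. The algorithm of Couvreur \emph{et al.} is commonly described as a Tarjan-style SCC enumeration, which naively would require space proportional to the exponentially large product. I would therefore recast ``reachable, complete, accepting SCC exists'' as a combination of reachability-style predicates over the compactly encoded product graph: guess a candidate state, verify reachability from the initial state, verify that the state lies on a cycle through an accepting state, and verify completeness by checking that every MC-successor of its MC-projection admits a product-successor within the same SCC. Each ingredient is a (co-)nondeterministic polynomial-space query, and Immerman--Szelepcs\'enyi together with Savitch collapse the composition into PSPACE.
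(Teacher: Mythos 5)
Your proposal is correct and follows essentially the same route as the paper: the upper bound comes from Theorem~\ref{th:pltld} together with the observation that a state of $M\otimes B_\varphi(\bar{{\a v}})$ fits in $O(\log m + |\varphi| + d\log N_{\varphi M})$ bits, and the lower bound is the reduction from qualitative LTL model checking of MCs. You merely spell out the on-the-fly, Savitch/Immerman--Szelepcs\'enyi implementation of the SCC search in more detail than the paper, which simply asserts the polynomial space bound.
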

\begin{proof}
Theorem \ref{th:pltld} gives an algorithm in PSPACE, as $M\otimes B_\varphi(\bar\ve)$ can be stored in $O(\log m + |\varphi| + d\log N_{\varphi M})$ bits. 
PSPACE hardness follows trivially, as for LTL formula $\varphi$ and MC $M$, deciding $\Pr(M\models \varphi)> 0$ (which is known to be a PSPACE complete problem) is the same as checking the emptiness of $V_{>0}(\varphi)$.  \hfill $\blacksquare$
\end{proof}

Just as for pLTL(F,X), we can use the bisection method to find $\min V_{>0}(\varphi)$. The search procedure invokes the model checking algorithm
multiple times. We can reuse the space each time we check $\Pr(M\models\ve(\varphi))>0$.
Hence, $\min V_{>0}(\varphi)$ can be found in polynomial space. 
The time complexity of finding $\min V_{>0}(\varphi)$ is ${\cal O}(m{\cdot}(N_{\varphi M})^d{\cdot}2^{|\varphi|}{\cdot}\log N_{\varphi M})$. 
Membership can also be similarly solved.
\begin{proposition}
For pLTL$_\de$-formula $\varphi$, $\ve \in V_{>0}(\varphi)?$ takes ${\cal O}(d{\cdot} \log \frac{N_{\varphi M}}{d})$ time, provided a 
representation of $V_{>0}(\varphi)$ is given.  
\end{proposition}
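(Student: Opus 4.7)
The plan is to mirror the approach used for the analogous membership statement for pLTL(F,X) at the end of Section \ref{sec:pltlfx}, with the bound $N$ replaced by $N_{\varphi M} = m \cdot |\varphi| \cdot 2^{|\varphi|}$ coming from Theorem \ref{th:pltld}. By that theorem combined with the monotonicity property (Remark \ref{monotonicity}), we have $\ve \in V_{>0}(\varphi)$ iff there exists $\ve' \in \min V_{>0}(\varphi)$ with $\ve' \leq \ve$ componentwise, and every such minimal $\ve'$ lies in the hypercube $\{0, \ldots, N_{\varphi M}\}^d$. Thus the membership query reduces to a standard Pareto-dominance query over a finite point set in $\mathbb{N}^d$.

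I would first specify the representation of $V_{>0}(\varphi)$ as a $d$-level lexicographic search tree over $\min V_{>0}(\varphi)$, built by sorting the minimal points along each of the $d$ coordinates in succession. This is exactly the structure sketched in the paragraph preceding the pLTL(F,X) membership proposition, and is what is implicitly provided by the hypothesis ``a representation of $V_{>0}(\varphi)$ is given.'' The existence of such a representation with controlled size is guaranteed by Proposition \ref{maxbound}, which bounds $|\min V_{>0}(\varphi)|$ by $(N_{\varphi M} \cdot d)^{d-1}$ in this setting.

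The membership test then proceeds as a recursive binary search descending the tree: at level $i$, binary-search on the sorted $i$-th coordinates to isolate the branches whose $i$-th coordinate is at most $\ve(x_i)$, and recurse on the remaining $d-i$ dimensions; the procedure halts as soon as a dominated minimal point is located, or reports non-membership when all branches are exhausted. Since coordinates are bounded by $N_{\varphi M}$ and, in a balanced representation, the distinct values per dimension need not exceed roughly $N_{\varphi M}/d$, each of the $d$ binary searches contributes $\mathcal{O}(\log(N_{\varphi M}/d))$ work, giving the stated $\mathcal{O}(d \cdot \log(N_{\varphi M}/d))$ overall.

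The main obstacle is to justify the sharper $\log(N_{\varphi M}/d)$ factor per dimension rather than the coarser $\log N_{\varphi M}$: this requires balancing the representation so that the range of distinct coordinate values is distributed across the $d$ axes, and bounding the branching on each level so that recursion does not inflate the running time beyond the claimed bound. The Pareto-antichain structure of $\min V_{>0}(\varphi)$ (no point strictly dominates another) is what prevents any single axis from absorbing all of the $N_{\varphi M}$ range, and is the key combinatorial fact to be exploited in the argument.
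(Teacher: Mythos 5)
Your overall route is exactly the one the paper intends: the paper offers no proof of this proposition beyond the remark ``Membership can also be similarly solved,'' deferring to the pLTL(F,X) construction, and your reduction --- by Theorem~\ref{th:pltld} together with Remark~\ref{monotonicity}, $\ve \in V_{>0}(\varphi)$ iff some $\ve' \in \min V_{>0}(\varphi) \subseteq \{0,\ldots,N_{\varphi M}\}^d$ satisfies $\ve' \leq \ve$, followed by a lexicographically sorted representation and a recursive $d$-dimensional binary search --- is precisely that sketch with $N$ replaced by $N_{\varphi M}$.

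There is, however, a concrete flaw in the one place where you go beyond the paper's sketch, namely your justification of the $\log(N_{\varphi M}/d)$ factor. The claim that the Pareto-antichain structure of $\min V_{>0}(\varphi)$ ``prevents any single axis from absorbing all of the $N_{\varphi M}$ range'' is false: the set $\{(i, N-i, 0, \ldots, 0) \mid 0 \leq i \leq N\}$ is an antichain in $\{0,\ldots,N\}^d$ whose first coordinate takes all $N+1$ distinct values, so incomparability does not force the distinct values to be spread across the $d$ axes, and a binary search on a single level can in the worst case cost $\log N_{\varphi M}$ rather than $\log(N_{\varphi M}/d)$. A second unresolved point, which you flag but do not close, is the branching of the recursion: a dominance query against an antichain stored in a $d$-level lexicographic tree may have to descend into several subtrees (this is why range-tree dominance queries are usually quoted as $\log^{d-1}$ rather than $d\log$), so ``halts as soon as a dominated point is found, or exhausts all branches'' does not by itself yield an $\mathcal{O}(d \cdot \log(\cdot))$ bound. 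To be fair, the paper does not justify either point --- its own pLTL(F,X) analogue even states a different-looking bound $\mathcal{O}(d \cdot \log N \cdot d)$ --- but as written your argument for the stated running time does not go through.
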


\section{Concluding Remarks}
\label{sec:concl}

This paper considered the verification of finite MCs against parametric LTL. 
We obtained several results on the emptiness problem for qualitative verification problems, including
necessary and sufficient conditions as well as some complexity results.
Future work consists of devising more efficient algorithms for the quantitative verification problems, and lifting the results to extended temporal logics~\cite{Vardi94reasoningabout} and stochastic games, possibly exploiting~\cite{DBLP:journals/tcs/Zimmermann13}.


\paragraph*{Acknowledgement.}
This work was partially supported by the EU FP7 projects MoVeS and Sensation,  the EU Marie Curie project MEALS and the Excellence initiative of the German federal government.

\bibliographystyle{splncs}
 \bibliography{pltl} 
 \newpage
\appendix
 \section{} 
 \textbf{Proposition 1.} The problem $V_{>0} (\varphi)=\emptyset$ is undecidable for $\varphi \in$ pLTL.
 \begin{proof}
The proof is based on  \cite[Th.\ 4.1]{DBLP:journals/tocl/AlurETP01}, 
where the problem of deciding the existence of a halting computation of a two-counter machine is reduced to the satisfiability of $\varphi$. 

Let $T$ be a counter machine with two counters $\{c_1,c_2\}$ and $k+1$ states $\{s_0,s_1,\hdots,s_k\}$, $s_0$ being the initial state and $s_k$ the 
halting state. We construct a formula $\varphi_T$ such that any satisfiable structure $(w,\ve)$, represent a sequence of configurations of $T$ that constitute
a halting computation. In other words, the sequence of letters in the word $w$, will encode the an halting computation of $T$ for the valuation $\ve$.

 Crucial argument is that, a parameter  
can be used to guess the maximum value of each counter in any halting computation of $T$. Thus using a parameter say $x$, each configuration of
a halting computation can be stored in length $x$. We use propositions $\{p_1,\hdots,p_k\}$ for each state.
Let $b$ be $0$ (or $1$) and $\bar b$ be $1$ (or $0$, respectively).  

Word $w$ constains alternating sequence $\{q_0^0,q_1^0\}$ denoting the start and end of a configuration. The distance
between $q_0^0$, $q_0^1$ is exactly $x$. This is imposed by the formula:
\[
 \varphi_1 :=  \bigwedge_{b}\bigg(q_0^b \to \X(\neg q_0^{\bar b} \U\!_{=x}\ q_0^{\bar b}) \wedge \X(q_0^b\U q_0^{\bar b})\bigg).
\] The propositions $\{q_1^{-0},q_1^{0},q_1^{+0}\}$ (or $\{q_1^{-1},q_1^{1},q_1^{+1}\}$) will be used to keep track of counter $c_1$ in the configuration
starting with $q_0^0$ (or $q_0^1$, respectively). Similarly, $\{q_2^{-0},q_2^{0},q_2^{+0}\}$, $\{q_2^{-1},q_2^{1},q_2^{+1}\}$ do the same for counter $c_2$. We impose 
the condition that all these propositions occur exactly once between $q_0^0$ and $q_0^1$, and $\{q_i^{-b},q_i^{b},q_i^{+b}\}$ ($i=1,2$) are always
occur consecutively.
\begin{align*}
 \varphi^{bi}:= \bigg( (q_0^b\to \neg q_i^{\bar b}\U q^b_i)\wedge (q_i^b\to \X(\neg q_i^b\U q_0^b)) \wedge (q_i^{+b}\to\X(\neg q_i^{+b}\U q_0^{b}))\\
                \wedge\ (q_i^{-b}\to\X(\neg q_i^{-b}\U q_0^{b})) \wedge (q_i^b \to \X q_i^{+b}) \wedge (q_i^{-b} \to \X q_i^b)\bigg).
  \end{align*} 
Let $\varphi_2:= \bigwedge_{b,i=1,2}\varphi^{bi}$.  
Consider a configuration $(s_i,c_1,c_2)$ of $T$. If this configuration occur in a halting computation, then it is encoded in $w$ as a sub-sequence  of propositions (of length $x+2$) between $q_0^0$ and $q_0^1$.
Exactly one of the state proposition, $p_i$ in this case is true at the start of the configuration $q_0^b$. This is  imposed by,
\[
 \varphi_3 := \bigwedge_b\bigg(q_0^b\to P \wedge \X(P'\U q_0^b)\bigg)
\] where $P:= (p_1\wedge \neg p_2 \hdots \neg p_k) \vee \hdots \vee (p_k \wedge \neg p_1 \hdots \neg p_{k-1})$ and 
$P':= (\neg p_1\wedge\hdots\wedge \neg p_k)$. 
The distance of $q_1^b$ from $q_0^b$ will be used
to keep track of the value of $c_1$. To be precise, at a distance $c_1$ from $q_0^b$ the sequence $q_1^{-b},q_1^{b},q_1^{+b}$ occurs.

Consider a transition $e:s_i\xrightarrow{c_1:=c_1+1} s_j$. So if we are in a configuration where the distance of $q_1^{-b}$ from $q_0^b$ is $c_1$ then
in the next configuration, the distance of $q_1^{-\bar b}$ from $q_0^{\bar b}$ is $c_1+1$ or the distance of $q_1^{b}$ to $q_1^{-\bar b}$ is $x$. This
can be encoded as:
\[
 \varphi_e := \bigwedge_b\bigg( (q_0^b\wedge p_i) \to (\neg q_0^{\bar b}\U p_j \wedge \neg q_0^{\bar b}\U(q_1^{b}\to\X(\neg q_1^b\U_{=x}q_1^{-\bar b})))\bigg) 
\] A similar formula can be defined for transitions where the counter is decremented. For a transition where a counter value is compared to $0$,
$e: s_i\xrightarrow{c_1=0}s_j$ is encoded as:
\[
 \varphi_e := \bigwedge_b\bigg((q_0^b\wedge p_i) \to (\neg q_0^{\bar b}\U p_j \wedge \X q_1^{-b})\bigg).
\] Thus, the entire transition relation of $T$ can be encoded as $\varphi_4 := \bigvee_e \varphi_e$. 
\[
\varphi_T:= q_0^0\wedge (\bigwedge_{i=1}^4 \varphi_i) \U p_k.  
\] As a satisfiable structure of $\varphi_T$ encodes a halting computation of $T$ (vice-versa), 
satisfiability of $\varphi_T$ becomes undecidable. Furthermore,
 if $(w,\ve)$ satisfies $\varphi_T$ then $p_k$ is true at some \emph{finite} length of $w$. 
 We can easily construct a
Markov chain $M$ such that the set of finite traces of $M$ is $\Sigma^*$ ($\Sigma^*$ is the set of sets of proposition used).  We know that probability
measure of any finite trace of $M$ is greater than $0$. Thus, we can decide whether $\varphi_T$ is satisfiable iff we can decide 
$\Pr(M\models{\a v}(\varphi_T))>0$ for some valuation ${\a v}$. Hence, 
we  conclude that the emptiness problem of $V_{>0}(\varphi)$ is undecidable. 
\hfill $\quad\blacksquare$\end{proof}

\section{}
\textbf{Theorem 1.} For $\varphi \in$ pLTL(F,X), $V_{{>} 0}(\varphi) \neq \emptyset \mbox{ iff } \bar{{\a v}} \in V_{{>} 0}(\varphi)$ with $\bar{{\a v}}(x){=}m{\cdot}|\varphi|$.
\begin{proof} 
The direction from right to left is trivial.
Consider the other direction.
Let $\varphi$ be a pLTL(F,X)-formula and assume $V_{> 0}(\varphi) \neq \emptyset$.
By monotonicity, it suffices to prove that ${\a v}\in V_{>0}(\varphi)$ with ${\a v} \not\leq \bar{{\a v}}$ implies $\bar{{\a v}} \in V_{>0}(\varphi)$.
The proof proceeds in a number of steps. 
(1) We show that it suffices to consider formulas without disjunction. 
(2) We show that if path fragment
$\pi[0..l] \models \bar\varphi$, (where LTL(F,X)-formula $\bar\varphi$ is obtained from $\varphi$ by omitting all parameters from $\varphi$) 
then $\pi[0..l] \models {\a v}_l(\varphi)$  with ${\a v}_l(x) = l$ for every $x$.
(3) We construct a deterministic B\"uchi automaton (DBA) $A_{\bar\varphi}$ for $\bar\varphi$ such that its initial and final state are at most $| \bar\varphi |$ 
transitions apart.
(4) We show that reachability of a final state in the product of MC $M$ and DBA $A_{\bar\varphi}$ implies the existence of a finite path in $M$  
of length at most $m{\cdot}|\varphi|$ satisfying $\bar\varphi$.
\begin{enumerate}
\item
As disjunction distributes over $\wedge, {\pmb\bigcirc}, \de$, and $\de_x$, each formula can be written in disjunctive normal form.
Let $\varphi \equiv \varphi_1 \vee \ldots \vee \varphi_k$, where each $\varphi_i$ is disjunction-free. 
Evidently, $|\varphi_i | \leq | \varphi |$.
Assume ${\a v} \in V_{> 0}(\varphi)$.
Then, ${\a v} \in V_{> 0}(\varphi_i)$ for some $0 < i \leq k$.
Assuming the theorem holds for $\varphi_i$ (this will be proven below), $\bar { {\a v}}_i \in V_{> 0}(\varphi_i)$ with 
$\bar { {\a v}}_i(x) =  |\varphi_i|{\cdot} m$.
Since $\bar{ {\a v}} \geq \bar{ {\a v}}_i$, it follows by monotonicity that $\bar{ {\a v}} \in V_{> 0}(\varphi_i)$, 
and hence, $\bar{ {\a v}} \in V_{> 0}(\varphi)$.
It thus suffices in the remainder of the proof to consider disjunction-free formulas.
\item
For pLTL(F,X)-formula $\varphi$, let $\bar \varphi$ be the LTL(F,X)-formula obtained from $\varphi$ by replacing all occurrences of $\de_x$  by $\de$, 
e.g., for $\varphi = \de_x(a \wedge \de_y b)$, $\bar\varphi = \de(a \wedge\de b)$. 
We claim that $\pi[0...l] \models \bar\varphi$ implies $\pi[0...l] \models { {\a v}}_l(\varphi)$ with ${ {\a v}}_l(x) = l$ for all $x$. 
This is proven by induction on the structure of $\varphi$. 
The base cases $a$ and $\neg a$ are obvious.
For the induction step, conjunctions, $\X \varphi$ and $\de \varphi$ are straightforward.
It remains to consider $\de_x \, \varphi$.
Assume $\pi[0...l] \models \de \, \bar\varphi$. 
Thus, for some $i \leq l$, $\pi[i...l] \models \bar\varphi$. 
By induction hypothesis, $\pi[i...] \models{ {\a v}}_{il}(\varphi)$ with ${ {\a v}}_{il}(y)= l{-}i$ for each variable $y$ in $\varphi$. 
Thus, $\pi[0..l] \models {{ {\a v}}_l}(\de_x \, \varphi)$ with ${ {\a v}}_l(x)= l$ and for all $y$ in $\varphi$, ${ {\a v}}_l(y) = l$.
\item
We provide a DBA
$A_{\bar \varphi} = \la Q, \Sigma, \delta, q_0, F \ra$ with $\Sigma = 2^{\smAP}$ for each LTL(F,X)-formula $\bar \varphi$ using the construction from~\cite{DBLP:journals/tocl/AlurT04}.
We first treat $\bar \varphi = a$ and $\bar \varphi = \de a$.
As every LTL(F,X)-formula can be obtained from $\de (a \wedge \varphi)$, $\varphi_1 \wedge \varphi_2$ and $\X \varphi$, we then treat these inductive cases.
(Negations are treated similarly.)
For $\bar\varphi = a$, $A_{a} = \la \set{q_0,q_1}, \Sigma, \delta, q_0, \set{q_1} \ra$ with $\delta(q_0, a) = q_1$ and $\delta(q_1,\true) = q_1$.   
For $\bar\varphi = \de a$ , the DBA $A_{\de a} = \la \set{q_0,q_1}, \Sigma, \delta, q_0, \set{q_1} \ra$, where $\delta(q_0,a) = q_1$, $\delta(q_0,\neg a) = q_0$ and $\delta(q_1,\true) = q_1$. 
This completes the base cases.
For the three inductive cases, the DBA is constructed as follows.
\begin{enumerate}
\item
Let $A_{\bar \varphi} = \la Q, \Sigma, \delta, q_0, F \ra$.
$A_{\de (a \wedge \bar \varphi)} = \la Q \cup \set{q_0'}, \Sigma,\delta', q_0', F\ra$ where $q_0'$ is fresh, 
$\delta'(q,{\cdot}) = \delta(q,{\cdot})$ if $q \in Q$, $\delta'(q_0',a) = \delta(q_0,a)$, and $\delta'(q_0', \neg a) = q_0'$. 
\item For $\bar \varphi_1 \wedge \bar \varphi_2$, the DBA is a standard synchronous product of the DBA for $\bar \varphi_1$ and $\bar \varphi_2$.
\item Let $A_{\bar \varphi} = \la Q, \Sigma, \delta, q_0, F \ra$.
$A_{\X\! \bar \varphi} = \la Q \cup \set{q_0'}, \Sigma, \delta', q_0', F \ra$ where $q_0'$ is fresh,  $\delta'(q_0',a) = q_0$ for all 
$a\in \Sigma$ and $\delta'(q,a) = \delta(q,a)$ for every $q \in Q$.
\end{enumerate}
  
A few remarks are in order.
The resulting DBA have a single final state.
In addition, the DBA enjoy the property that the reflexive and transitive closure of the transition relation is a partial order~\cite{DBLP:journals/tocl/AlurT04}.
Formally, $q \preceq q'$ iff $q' \in\delta^*(q,w)$ for some $w \in \Sigma^\omega$. 
The diameter of $A_{\bar \varphi}$ is the length of a longest simple path from the initial to the final state.
This implies that
the diameter of $A_{\de (a \wedge \bar \varphi)}$ and  $A_{\X\! \bar \varphi}$ is $n{+}1$ where $n$ is this diameter of $A_{\bar \varphi}$, and
the diameter of $A_{\bar \varphi_1 \wedge \bar \varphi_2}$ is $n_1+n_2$ where $n_i$ is the diameter of $A_{\bar \varphi_i}$, $i \in \set{1,2}$.
%
\item
Let $\varphi \equiv \varphi_1 \vee \ldots \vee \varphi_k$, where each $\varphi_i$ is disjunction-free, with DBA $A_{\bar \varphi_i}$.
Evidently, $V_{>0}(\varphi) \neq \emptyset$ iff $V_{>0}(\varphi_i) \neq \emptyset$ for some disjunct $\varphi_i$.
Consider the product of MC $M$ and DBA $A_{\bar\varphi_i}$, denoted $M \otimes A_{\bar\varphi_i}$; see, e.g., \cite[Def.\ 10.50]{DBLP:books/daglib/0020348}.
By construction, $M\otimes A_{\bar\varphi_i}$ is partially ordered and has diameter at most $m{\cdot}|\varphi_i|$.
We have that $\Pr(M\models \bar \varphi_i) > 0$ iff an accepting state in $M \otimes A_{\bar\varphi_i}$ is reachable.
Thus, there exists a finite path $\pi[0..m{\cdot}|\varphi_i|]$ in $M$ with $\pi[0..m{\cdot}|\varphi_i] \models \bar \varphi$, or, $\pi[0..m{\cdot}|\varphi|] \models \bar\ve(\varphi)$.
This concludes the proof. 
\end{enumerate}
$M\otimes A_{\bar\varphi_i} $ can also be used to show that, if we have a valuation $\ve$ such that $\ve(x)>m{\cdot}|\varphi|$ and for all other variables $y\neq x$, $\ve(x)\leq m{\cdot}|\varphi|$ and 
$\ve \in V_{>0}(\varphi)$ then $\ve' \in V_{>0}(\varphi)$, where $\ve'(x)=m{\cdot}|\varphi|$ and for $y\neq x$, $\ve'(y)=\ve(y)$. The argument proceed as induction on $\bar\varphi_i$.
\hfill $\blacksquare$
\end{proof}

\section{}
\textbf{Proposition 5.} The problem $V_{>0}(\varphi)\neq \emptyset$ is NP-complete for $\varphi \in$ pLTL(F,X)
\begin{proof}
 Similar to the NP-hardness proof of satisfiability of LTL(F,X) formulas~\cite[Th.\ 3.7]{SistlaClarke85}, we give a polynomial reduction from the 3-SAT problem.
For 3-CNF formula $\phi$ with boolean variables $\{t_1,\hdots,t_n\}$, 
we define MC $M$ and pLTL(F,X) formula $\varphi$  such that $\phi$ is satisfiable iff  $V_{>0}(\varphi)$ is not empty.
Let 3-CNF formula $\phi = C_1 \wedge \hdots \wedge C_k$ with $C_i = d_{i1} \vee d_{i2} \vee d_{i3}$, 
where literal $d_{il}$ is either $t_k$ or $\neg t_k$.
Let MC $M = (S, \bfP, s_0, L)$ with $\AP = \set{C_i \mid 0 < i \leq k}$ be:
\begin{itemize}
\item 
$S = \set{s_i \mid 0 \leq i \leq n} \cup \set{t_i \mid 0 < i \leq n} \cup \set{\neg t_i \mid 0 < i \leq n}$
\item 
$\bfP(s_i,t_{i+1})>0$, $\mathbf P(s_i,\neg t_{i+1}) > 0$ for $0 \leq i < n$, $\bfP(t_i,s_i) > 0$ and $\bfP(\neg t_i,s_i) > 0$ for $0 < i \leq n$, 
and $\bfP(s_n,s_n)=1$ (the actual probabilities are not relevant), 
\item
$C_i \in L(t_j)$ iff $d_{il} = t_j$ for some $0 < l \leq 3$, and $C_i \in L(\neg t_j)$ iff $d_{il} = \neg t_j$ for some $0 < l \leq 3$, and
$L(s_j) = \emptyset$ for all $0 \leq j \leq n$.
\end{itemize}
Let pLTL(F,X)-formula $\varphi = \de_{y_1} \, C_1 \wedge \ldots \wedge \de_{y_k} \, C_k$. 
Then $\phi$ is satisfiable iff $ V_{>0}(\varphi)$ is not empty. Evidently, $M$ and $\varphi$ are obtained in polynomial time.

It remains to show membership in NP.
By the proof of Theorem~\ref{FX}, $V_{>0}(\varphi) \neq \emptyset$ iff there is a finite path of $M$ of length $m{\cdot}|\varphi|$ satisfying 
$\bar\varphi$. 
Thus, we non-deterministically select a path of $M$ of length $m{\cdot}|\varphi|$ and check (using standard algorithms) in polynomial time whether 
it satisfies $\bar\varphi$.  \hfill $\blacksquare$
\end{proof}

\section{}
\textbf{Proposition 6.} $|\min V_{>0}(\varphi) \, | \leq (N{\cdot}d)^{d{-}1}$.
\begin{proof}   
Let  $H= \set{0, \ldots, N}^d$. 
$(H, \leq)$ is a partially ordered set where $\leq$ is element-wise comparison. 
A subset $S^{(k)}$ of $H$ has rank $k$ if the summation of the coordinates of every element of $S$ is $k$.
By~\cite{K1978}, the largest set of incomparable elements (anti-chain) is given by $Z^{(k)}$ where $k$ is $N{\cdot}d/2$ if even, else $k$ is $(N{\cdot}d{-}1)/2$. 
Then $|Z| = {\lfloor N{\cdot}d/2\rfloor + d-1 \choose d-1}.$ 
\hfill $\blacksquare$\end{proof}

\section{}
\textbf{Theorem 3.}
$\min V_{>0}(\Box\de_x \, a) = n_0$ where $n_0 = \displaystyle\max \tiny{\left( n_{a,B}, \min_{\pi = s_0 \ldots s_n, s_n \models a_B} c(\pi) \right)}$ 
if $n_{a,B} < d_{a,B}(s_0)$ and $n_0 = n_{a,B}$ otherwise.

\begin{proof}
We show for $n \geq n_0$, $\Pr(\Box\de_n \, a) > 0$, and for $n < n_0$,
$\Pr(\Box\de_n \, a) = 0$.
Distinguish:
\begin{enumerate}
\item $n_{a,B} \geq d_{a,B}(s_0)$.  
Then, from $s_0$ an $a$-state in $B$ can be reached within $n_{a,B}$ steps, i.e., $\Pr(s_0 \models \de_{n_{a,B}} \, a_B) > 0$.
For this $a_B$-state, $s$, say, by Proposition~\ref{buchi} it follows $\Pr(s \models \Box \de_{n_{a,B}} \, a) = 1$.
Together this yields $\Pr(s_0 \models \Box \de_n \, a) > 0$ for each $n \geq n_{a,B} = n_0$.
For $n < n_0 = n_{a,B}$, it follows by Proposition~\ref{buchi} that $\Pr(s \models \Box \de_n \, a) = 0$ for every $a_B$-state $s$.
Thus, $\Pr(s_0 \models \Box\de_n \, a) = 0$.
\item
$n_{a,B} < d_{a,B}(s_0)$.  
As $B$ is accepting, $d_{a,B}(s_0) \neq \infty$.
Consider a simple path $\pi$ from $s_0$ to an $a$-state in $B$.
Let $c(\pi)$ be the maximal distance between two consecutive $a$-states along this path.
Then it follows $\Pr(s_0 \models \Box \de_k \, a) > 0$ where $k = \max (c(\pi), n_{a,B})$.
By taking the minimum $c_{min}$ over all simple paths between $s_0$ and $B$, it follows $\Pr(s_0 \models \Box \de_n \, a) > 0$ for each $n \geq n_0 = \max (n_{a,B}, c_{min})$ with $c_{min} = \min_{\pi \in \smPaths(s_0, \de a_B)} c(\pi)$.
For $n < n_0$, distinguish between $n_0 = n_{a,B}$ and $n_0 = c_{min}$.
In the former case, it follows (as in the first case) by Proposition~\ref{buchi} that $\Pr(s_0 \models \Box\de_n \, a) = 0$ for all $n \geq n_0$. 
Consider now $n_0 = c_{min} \geq n_{a,B}$.
Let $n < n_0$.
By contra-position.
Assume $\Pr(s_0 \models \Box \de_n \, a) > 0$.
Let $\pi = s_0 \ldots s_{1,a} \ldots s_{2,a} \ldots \ldots s_{k,a}$ be a finite path fragment in $M$ where $s_{i,a} \models a$ and $s_{k,a}$ is the first $a$-state along $\pi$ which belongs to $B$.
Then, by definition of the digraph $G_a$, the sequence $\pi = s_0 s_{1,a} s_{2,a} \ldots s_{k,a}$ is a path in $G_a$ satisfying $c(s_{i,a},s_{i{+}1,a}) \leq n$ for all $0 \leq k < n$. 
But then $c_{min} \leq n$.
Contradiction. \hfill $\blacksquare$
\end{enumerate} \end{proof}

\section{}
\textbf{Proposition 9.}
The sets $V_{>0}(\Box \de_{x} \, a)$ and $V_{=1}(\Box \de_{x} \, a)$ can be determined in polynomial time by a graph analysis of MC $M$.
\begin{proof} 
We argue that $\min V_{>0}(\Box\de_x \, a)$ can be determined in polynomial time.
The proof for $V_{=1}(\Box \de_x \, a)$ goes along similar lines and is omitted here.
We can determine both $n_{a,B}$ and $d_{a,B}(s_0)$ in linear time.
It remains to obtain $c_{min} = \min_{\pi = s_0 \ldots s_n, s_n \models a_B} c(\pi)$ in case $n_{a,B} < d_{a,B}(s_0)$.
This can be done as follows.
The distances $d(s,s')$, required for the function $c$ in the digraph $G_a = (V,E)$, can be obtained by applying Floyd-Warshall's all-pairs shortest path algorithm on the graph of $M$.
This takes ${\cal O}(m^3)$.
To obtain $c_{min}$, we use a cost function $F: V \to \Nats$ which is initially set to $0$ for initial state $s_0$ and $\infty$ otherwise.
Let $pQ$  be a min priority queue, initially containing all vertices of $G_a$, prioritized by the value of $F$. Algorithm~\ref{buchi-algo} finds 
$c_{min}$ in ${\cal O}(m^2{\cdot}\log m)$.
\begin{algorithm}
{\small
 \caption{Input: MC $M$  Output: $c_{min}$}\label{buchi-algo}
  \begin{algorithmic}[1]
    \State Initialize $F$, $\mbox{\sf found} := \false$ and $pQ$.
    \While{ $\left( \neg \mbox{\sf found} \mbox{ and } pQ \neq \emptyset \right)$}
      \State $u := \mbox{\sf pop}(pQ)$;  $\mbox{\sf found} := (a_B \in L(u))$;
      \For{$v \in pQ$} $F(v):= \min \left(F(v),\max(F(u),c(u,v)) \right)$ 
      \EndFor
    \EndWhile
  \end{algorithmic}
}
\end{algorithm}
Its correctness follows from the invariant $F(v) \leq \max(F(u),c(u,v))$. 
Using this we can find the minimum $n$ for which we can reach an accepting BSCC via a finite path satisfying $\Box\de_n \, a$. 
 \hfill $\blacksquare$
\end{proof}

\section{}
\noindent
\textbf{Theorem 5.} 
For $\varphi\in$ pLTL$_\de$, $V_{= 1}(\varphi) \neq \emptyset \mbox{ iff } \bar{ {\a v}} \in V_{= 1}(\bar\varphi)$ with $\bar{ {\a v}}(x) = m{\cdot}|\varphi|{\cdot}2^{|\varphi|}$.
\begin{proof} Consider the direction left to right.
 If there exists a reachable maximal SCC $C$ in the cross product  which is not complete then $\Pr(M\models \varphi)<1$. If every reachable maximal SCC is complete then the model checking task boils down to 
 reachability of such SCC. Thus the existence of a cycle before reaching a complete SCC implies that the probability measure of the set of paths satisfying $\varphi$ is strictly less than $1$ for any 
 value of the parameters. The largest cycle in the product can have at most $m{\cdot}|\varphi|{\cdot}2^{|\varphi|}$ states. Thus, if $\Pr(M\models \bar\ve(\varphi))$ is less than 1 then $V_{=1}(\varphi)$ is empty.
 \hfill $\blacksquare$\end{proof}
 
\end{document}